\newcommand{\Zq}{\ensuremath{\mathbb{Z}_q}}
\newcommand{\ve}[1]{\ensuremath{\boldsymbol{#1}}}
\newtheorem{construction}{Construction}
\begin{document}

\title{Duplication-Correcting Codes\thanks{This work was supported by the Institute for Advanced Study (IAS), Technische Universit\"{a}t M\"{u}nchen (TUM), with funds from the German Excellence Initiative and the European Union's Seventh Framework Program (FP7) under grant agreement no.~291763.\\
	Parts of this work have been presented at the 2017 Workshop on Coding and Cryptography (WCC), St. Petersburg \cite{Lenz17}.}
}


\author{Andreas Lenz \and Antonia Wachter-Zeh \and Eitan Yaakobi}


\institute{Andreas Lenz \at
              Institute for Communications Engineering, Technical University of Munich (TUM) \\
              \email{andreas.lenz@mytum.de}           
           \and
           Antonia Wachter-Zeh \at
           Institute for Communications Engineering, Technical University of Munich (TUM) \\
           \email{antonia.wachter-zeh@tum.de}
           \and
           Eitan Yaakobi \at
           Computer Science Department, Technion -- Israel Institute of Technology, Haifa, Israel\\
           \email{yaakobi@cs.technion.ac.il}
}

\date{}

\maketitle

\begin{abstract}
In this work, we propose constructions that correct duplications of multiple consecutive symbols. These errors are known as tandem duplications, where a sequence of symbols is repeated; respectively as palindromic duplications, where a sequence is repeated in reversed order. We compare the redundancies of these constructions with code size upper bounds that are obtained from sphere packing arguments. Proving that an upper bound on the code cardinality for tandem \emph{deletions} is also an upper bound for \emph{inserting} tandem duplications, we derive the bounds based on this special tandem \emph{deletion} error as this results in tighter bounds. Our upper bounds on the cardinality directly imply lower bounds on the redundancy which we compare with the redundancy of the best known construction correcting arbitrary burst insertions.
Our results indicate that the correction of palindromic duplications requires more redundancy than the correction of tandem duplications and both significantly less than arbitrary burst insertions.
\keywords{Error-correcting codes \and Duplication errors \and Generalized sphere packing bound \and DNA storage \and Combinatorial channel \and Burst insertions/deletions}
\subclass{94B20 \and 94B65 \and 94B60}
\end{abstract}

\section{Introduction}
The increasing demand for high density and long-term data storage and the recent advance in biotechnological methodology has motivated the storage of digital data in DNA. One interesting application in this area involves the storage of data in the DNA of living organisms. Tagging genetically modified organisms, infectious bacteria, conducting biogenetical studies or storing data are only a few in a list of modern applications. However, the data is corrupted by errors during the replication of DNA and therefore an adequate error protection mechanism has to be found. Typical errors include point insertions, deletions, substitutions and tandem or palindromic duplications. While the correction of substitutions, insertions and deletions is well studied, knowledge about correcting tandem and palindromic duplication errors is relatively limited. In the former case, a subsequence of the original word is duplicated and inserted directly after the original subsequence.
An example for a tandem duplication of length~$3$ in a DNA sequence $GATCATG$ is $GATC\underline{ATC}ATG$, where the underlined part highlights the duplication. Similarly, a palindromic duplication in the same word is $GATC\underline{CTA}ATG$. Note that these duplication errors are a special kind of burst insertion errors, where a consecutive sequence of random symbols is inserted into the transmitted sequence. In this paper, we propose several constructions that correct tandem, respectively palindromic duplication errors that yield lower redundancies than the best known burst insertion correcting codes. The redundancies of our constructions are further compared with lower bounds that we obtain from a sphere packing argument.
\subsection{Related Work}
Upper and lower bounds on the size of the largest code have been well studied for substitution errors in the Hamming metric. However, for other error models, such as insertions or deletions only recently non-asymptotic upper bounds onto code sizes have been found \cite{Fazeli15,Kulkarni13}. For repetition errors, which are special kinds of insertion errors, the only known upper bound is the asymptotic bound from Levenshtein \cite{Levenshtein65}. The non-asymptotic bounds in \cite{Kulkarni13} have been found by computing the fractional transversal number of the hypergraph associated with deletion errors. In \cite{Fazeli15}, this procedure has been analyzed and generalized to other error models, such as the $Z$-channel, grain-error channel, and projective spaces. Further, it has been shown that the average sphere packing value provides a valid upper bound on code sizes, if the associated hypergraph is regular and symmetric. Repetition errors form a related error model to tandem and palindromic duplications and corresponding error correcting codes have been studied in e.g. \cite{Dolecek10} and \cite{Levenshtein65}. More recently, an explicit construction for multiple repetition errors has been suggested \cite{Mahdavifar17}. These codes use the fact that repetition errors are equivalent to errors in the $\ell_1$-metric and are based on Lee-metric BCH codes \cite{Roth94}. Codes correcting tandem duplications have been considered in \cite{Jain16}, where amongst others a construction for the correction of an arbitrary number of fixed length duplications was presented. These codes are based on choosing irreducible words with respect to tandem duplications and their relation to zero run-length-limited systems has been illustrated. In this work we employ the method presented in \cite{Fazeli15}, known as the generalized sphere packing bound for tandem duplications to find non-asymptotic upper bounds on the code cardinalities and additionally present low-redundancy constructions that correct errors of these types, which extends the work in \cite{Lenz17}.
\subsection{Outline}
The paper is organized as follows. After introducing the preliminaries, in Section \ref{sec:relationship:duplication:deletion:codes}, we define duplication \emph{deletion} errors, an artificial error model that helps to find tight code size upper bounds. We derive upper bounds on the cardinality of codes that correct tandem duplication errors of length $\ell$ by finding a transversal on the associated hypergraph based on the error sphere size distribution in Section \ref{sec:upper:bounds:cardinalities}. These upper bounds directly imply a lower bound on the redundancy of a code. Finally, in the last section, we propose constructions that correct single tandem or palindromic duplications and compare their redundancies with the lower bounds.
\subsection{Preliminaries} \label{ss:preliminaries}
For two integers $a,b \in \mathbb{N}_0$, we write $\langle a \rangle_b$ to denote the integer rest of $a$ divided by $b$. We denote $\ve{x} = (x_1, x_2, \dots, x_n) \in \mathbb{Z}_q^n$ to be a vector of $n$ symbols over the ring of integers modulo $q$, $x_i \in \mathbb{Z}_q \, \forall \, i$. The length of a vector $\boldsymbol{x}$ is denoted by $|\ve{x}|$.

A \emph{tandem duplication} of length $\ell$ at position~$p$ with $0\leq p \leq n-\ell$ in a word $\ve{x} = (\ve{uvw})$, with $|\ve{u}| = p, |\ve{v}| = \ell, |\ve{w}| = n-\ell-p$ is defined by $\tau_{\ell}(\ve{x}, p) = (\ve{uvvw}) \in \mathbb{Z}_q^{n+\ell}$ and a \emph{palindromic duplication} of length $\ell$ is defined by $\rho_{\ell}(\ve{x},p) = (\ve{uv} \ve{v}^\mathrm{R} \ve{w})$, where $\ve{v}^\mathrm{R} = (v_\ell v_{\ell-1} \dots v_1)$ is the reversal of $\ve{v}$. The inverse operation, a \emph{tandem deletion} of length $\ell$ at position $0\leq p \leq n-2\ell$ in a word $\ve{x} = (\ve{uvvw})$ with $|\ve{u}| = p, |\ve{v}| = \ell, |\ve{w}| = n-2\ell-p$ is denoted by $\tau^D_{\ell}(\ve{x},p) = (\ve{uvw}) \in \mathbb{Z}_q^{n-\ell}$. Finally, we write a \emph{palindromic deletion} of length $\ell$ at position $0\leq p \leq n-2\ell$ in a word $\ve{x} = (\ve{uv}\ve{v}^\mathrm{R}\ve{w})$ with $|\ve{u}| = p, |\ve{v}| = \ell, |\ve{w}| = n-2\ell-p$ as $\rho^D_{\ell}(\ve{x},p) = (\ve{uvw}) \in \mathbb{Z}_q^{n-\ell}$.
\begin{example} [Tandem and palindromic duplication and deletion errors]
	Consider the word $\ve{x} = (11110220) \in \mathbb{Z}_3^{8}$. Then, a tandem duplication of length $2$ at position $p=3$ yields $\tau_2(\ve{x}, 3) = (11110\underline{10}220)$, where the underlined part is the erroneous duplication. Similarly, a palindromic duplication of length $2$ at position $3$ results in $\rho_2(\ve{x}, 3) = (11110\underline{01}220)$. Examples for tandem and palindromic deletion errors of length $2$ in $\ve{x}$ are $\tau^D_2(\ve{x}, 0) = (110220)$ and $\rho^D_2(\ve{x}, 4) = (111102)$.
\end{example}
Note that the deletion operations are only defined at positions $p$, where the word $\ve{x}$ is of the form $(\ve{uvvw})$, respectively $(\ve{uv}\ve{v}^\mathrm{R}\ve{w})$ with $|\ve{u}| = p$.

With these definitions, the \emph{sphere} of a word $\ve{x}$ with radius $t$ is the set of all vectors that are reached by \emph{exactly} $t$ tandem or palindromic duplications, respectively deletions, i.e.,
\begin{equation} \label{eq:error:sphere}
S^\epsilon_t(\ve{x}) = \left\{\ve{y} | \ve{y} = \epsilon\left(...(\epsilon(\ve{x}, p_1)...), p_t\right)\right\},
\end{equation}
where $\epsilon$ is the error type and $p_i$ denote the position of the $i$-th duplication, respectively deletion. Here,
\begin{itemize}
	\item $\epsilon = \tau_{\ell}$ for tandem duplication errors,
	\item $\epsilon = \tau^D_{\ell}$ for tandem deletion errors,
	\item $\epsilon = \rho_{\ell}$ for palindromic duplication errors,
	\item or $\epsilon = \rho^D_{\ell}$ for palindromic deletion errors.
\end{itemize}
We further define the error \emph{ball}
\begin{equation} \label{eq:error:ball}
B^\epsilon_t(\ve{x}) = \left\{\ve{y} | \ve{y} = \epsilon\left(...(\epsilon(\ve{x}, p_1)...), p_\theta\right), \theta \leq t\right\},
\end{equation}
as the set of all vectors that can be reached by \emph{at most} $t$ errors. Note that interestingly the size of these sets depends on $\ve{x}$, which is the key complication when computing upper bounds on the code cardinality.

For a word~$\ve{x}$, let $r(\ve{x})$ be the number of \emph{runs}, $r_i(\ve{x})$ the length of the $i$-th run, respectively $r^{(i)}(\ve{x})$ the number of runs of length~$i$ and $r^{(\geq i)}(\ve{x})$ be the number of runs of length at least $i$ in $\ve{x}$.
\begin{example}[Runs]
	Consider the word $\ve{x} = (11110220)$, which has $r(\ve{x}) = 4$ runs. The lengths of those runs are $r_1(\ve{x}) = 4$, $r_2(\ve{x}) = 1$, $r_3(\ve{x}) = 2$ and $r_4(\ve{x}) = 1$. Therefore, there are $r^{(1)}(\ve{x}) = 2$ runs of length $1$, $r^{(2)}(\ve{x}) = 1$ run of length $2$, $r^{(3)}(\ve{x}) = 0$ runs of length $3$ and $r^{(4)}(\ve{x}) = 1$ run of length $4$.
\end{example}
The $\ell_1$ norm of a vector $\ve{x} \in \mathbb{N}_0^n$ over the natural numbers is given by the sum of its entries and is denoted by $|\ve{x}|_1 = \sum_{i=1}^{n}x_i$.
\begin{definition} \label{eq:t:error:correcting}
	A codebook $\mathcal{C} \subset \mathbb{Z}_q^n$ is called a $t$-\emph{tandem duplication} (\emph{palindromic duplication}, \emph{tandem deletion}, \emph{palindromic deletion}) correcting codebook, if $B^\epsilon_{t}(\ve{c}_1) \cap B^\epsilon_{t}(\ve{c}_2) \neq \emptyset$ implies $\ve{c}_1 = \ve{c}_2$ for all $\ve{c}_1, \ve{c}_2 \in \mathcal{C}$.
\end{definition}
In the following we will use the term \emph{single}-error correcting for the case $t=1$.
\subsubsection{Tandem Duplication Preliminaries} \label{ss:tandem:duplication:preliminaries}
The following definitions are based on the findings in \cite{Jain16} and show the connection between tandem duplications and $\ell_1$-metric errors, which is helpful for both, deriving upper bounds onto code cardinalities and finding code constructions that correct tandem duplications.
\begin{definition}[$\ell$-step derivative]\label{def:l:step:derivative}
	For $\ve{x} \in \mathbb{Z}_q^n$ we define the $\ell$-step derivative $\phi_\ell(\ve{x})=(\ve{u}_x, \ve{v}_x)$ with $\ve{u}_x = (x_1, x_2, \dots, x_\ell)$ and $\ve{v}_x = (x_{\ell+1}, x_{\ell+2}, \dots, x_n) - $ $(x_1, x_2, \dots, x_{n-\ell})$.
\end{definition}
Note that in the following we refer to $\ve{v}_x$ as the second part of the $\ell-$step derivative of $\ve{x}$, as introduced in Definition \ref{def:l:step:derivative}. It has been shown in \cite{Jain16} that a tandem duplication of length $\ell$ in $\ve{x}$ corresponds to an insertion of $\ell$ consecutive zeros in $\ve{v}_x$. This motivates the introduction of the $\ell$-\emph{trunk} and $\ell$-\emph{zero-signature} representation for $\ve{v}_x$. 
\begin{definition}[$\ell$-Trunk, $\ell$-zero signature]
	Denote by $0^m$ the $m$-fold repetition of $0$ and let $\ve{v} = (0^{m_0},w_1,0^{m_1},w_2, \dots, w_{p}, 0^{m_p})$ with $w_i \in \mathbb{Z}_q \setminus \{0\}$ and $p = wt_{\mathrm{H}}(\ve{v})$ be the Hamming weight of $\ve{v}$. We define the $\ell$-\emph{trunk} of $\ve{v}$ to be $\mu_\ell(\ve{v}) = (0^{\langle m_0 \rangle_\ell},w_1,0^{\langle m_1 \rangle_\ell},w_2, \dots, w_{p}, 0^{\langle m_p \rangle_\ell})$ as the word that is obtained by shortening every zeros run of length $m$ to be of length $m \bmod \ell$. Further, the $\ell$-\emph{zero signature} of $\ve{v}$ is defined as $\sigma_\ell(\ve{v}) = \left(\left\lfloor\frac{m_0}{\ell}\right\rfloor, \left\lfloor\frac{m_1}{\ell}\right\rfloor, \dots, \left\lfloor\frac{m_p}{\ell}\right\rfloor\right)$.
\end{definition}
By this definition, the $\ell$-zero signature is a vector over the natural numbers $\mathbb{N}_0$ and counts the number of distinct $\ell$ consecutive $0$'s in one run of consecutive $0$'s in $\ve{v}$. Note that $\ve{v}$ is uniquely determined by its $\ell$-trunk $\mu_\ell(\ve{v})$ and $\ell$-zero-signature $\sigma_\ell(\ve{v})$. It is easy to see that a tandem duplication in $\ve{x}$ corresponds to increasing an entry of $\sigma_\ell(\ve{v}_x)$ by 1 (a tandem deletion corresponds to decreasing the entry by 1), but leaves the root $(\ve{u}_x, \mu_\ell(\ve{v}_x))$ unchanged.
\begin{example}
	Let $\ve{x} = (21010121) \in \mathbb{Z}_3^8$ be a word of length $8$. Its $\ell$-step derivative $\phi_{}(\ve{x}) = (\ve{u}_x, \ve{v}_x)$ for $\ell=2$ is given by $\phi_2(\ve{x}) = ((21), (100020))$. Hence, the $\ell$-trunk is $\mu_2(\ve{v}_x) = (1020)$. The $\ell$-zero signature has length $wt_{\mathrm{H}}(\ve{v})+1 = 3$ and is given by $\sigma_2(\ve{v}_x) = (010)$. The word $\ve{x}$ is now perturbed by a tandem duplication of length $2$, resulting in $\ve{y} = \tau_2(\ve{y}, 0) = (21\underline{21}010121)$. Computing the $\ell$-step derivative yields $\phi_2(\ve{y}) = (\ve{u}_y, \ve{v}_y) = ((21), (\underline{00}100020))$. The $\ell$-trunk computes to $\mu_2(\ve{v}_y) = (1020)$ and the $\ell$-zero signature of $\ve{v}_y$ is $\sigma_2(\ve{v}_y) = (110)$. As expected, the first entry of the $\ell$-zero signature increased by $1$.
\end{example}
The notation is summarized in the Table \ref{tab:notation:tandem:duplication}.
\begin{table}[h]
	\centering
	\begin{tabular}{c|c}
		Notation & Definition \\
		\hline
		$\tau_\ell(\ve{x}, p)$ & Tandem duplication of length $\ell$ at position $p$ \\
		$\tau^D_\ell(\ve{x}, p)$ & Tandem deletion of length $\ell$ at position $p$ \\
		$\rho_\ell(\ve{x}, p)$ & Palindromic duplication of length $\ell$ at position $p$ \\
		$\rho_\ell^D(\ve{x}, p)$ & Palindromic deletion of length $\ell$ at position $p$ \\
		$S_t^\epsilon(\ve{x})$ & Error sphere of $t$ errors of type $\epsilon$ \\
		$B_t^\epsilon(\ve{x})$ & Error ball of $t$ errors of type $\epsilon$ \\
		$\phi_\ell(\ve{x})$ & $\ell$-step derivative ($\phi_\ell(\ve{x}) = (\ve{u}_x, \ve{v}_x)$) \\
		$\mu_\ell(\ve{v}_x)$ & $\ell$-trunk \\
		$\sigma_\ell(\ve{v}_x)$ & $\ell$-zero signature

	\end{tabular}
	\caption{Summary of notation}
	\label{tab:notation:tandem:duplication}
\end{table}
\section{Relationship between Duplication and Deletion Codes}
\label{sec:relationship:duplication:deletion:codes}
We start with revealing relationships between tandem duplication correcting codes with tandem deletion correcting codes. Although the tandem deletion error is an artificial model, it will help later to formulate tight bounds on codes correcting tandem duplication errors.
\subsection{Equivalence of Tandem Duplication and Deletion Codes} \label{ss:equivalence:tandem:duplication:deletion:codes}

For conventional insertion and deletion correcting codes, it is known that a code $\mathcal{C}$ is $t$-insertion correcting if and only if it is $t$-deletion correcting \cite{Levenshtein66}. A similar behavior can be shown for tandem duplications, which is formulated in the following theorem.
\begin{theorem} \label{thm:equivalence:tandem:duplication:deletion}
	A code $\mathcal{C} \subset \mathbb{Z}_q^n$ is $t$-tandem duplication (length $\ell$) correcting if and only if it is $t$-tandem deletion (length $\ell$) correcting.
\end{theorem}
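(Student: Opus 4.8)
The plan is to transport the entire problem into the signature domain, where tandem duplications and deletions turn into increments and decrements of a nonnegative integer vector, and then settle the equivalence by an elementary $\ell_1$-argument. Recall from Section~\ref{ss:tandem:duplication:preliminaries} that the map $\ve{x} \mapsto \big((\ve{u}_x, \mu_\ell(\ve{v}_x)), \sigma_\ell(\ve{v}_x)\big)$ is a bijection between words and (root, signature) pairs, that a tandem duplication of length $\ell$ increases exactly one entry of $\sigma_\ell(\ve{v}_x)$ by $1$ while a tandem deletion decreases one entry by $1$, and that in both cases the root $(\ve{u}_x, \mu_\ell(\ve{v}_x))$ is left unchanged. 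First I would record the consequence that both error types preserve the root as well as the number $k = wt_{\mathrm{H}}(\ve{v}_x)+1$ of signature entries. Hence two codewords with different roots have disjoint balls in \emph{both} models and, by the correcting criterion of Definition~\ref{eq:t:error:correcting}, can be ignored; the whole problem decouples over root classes.

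Within a single root class every word is identified with its signature $\sigma \in \mathbb{N}_0^k$, the $t$-duplication ball becomes the increment ball $\{\tau \in \mathbb{N}_0^k : \tau \geq \sigma, \, |\tau - \sigma|_1 \leq t\}$, and the $t$-deletion ball becomes the decrement ball $\{\tau \in \mathbb{N}_0^k : 0 \leq \tau \leq \sigma, \, |\sigma - \tau|_1 \leq t\}$ (the constraint $\tau \geq 0$ reflecting that a deletion requires a signature entry of at least $1$). Through the bijection, the duplication balls of two codewords meet if and only if the increment balls of their signatures meet, and likewise for deletions, so it suffices to prove, for arbitrary $\sigma, \sigma' \in \mathbb{N}_0^k$, that their increment balls intersect exactly when their decrement balls do.

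The heart of the argument is that the componentwise maximum and minimum are the optimal meeting points. For increments, any common upper bound $\tau \geq \sigma, \sigma'$ satisfies $\tau \geq \tau^{*} := \max(\sigma,\sigma')$ componentwise, and enlarging $\tau$ only increases both $\ell_1$-distances; hence the increment balls intersect iff $|\tau^{*} - \sigma|_1 \leq t$ and $|\tau^{*} - \sigma'|_1 \leq t$, that is, iff $a \leq t$ and $b \leq t$, where $a = \sum_i \max(0, \sigma'_i - \sigma_i)$ and $b = \sum_i \max(0, \sigma_i - \sigma'_i)$. Dually, for decrements the best common lower bound is $\tau_{*} := \min(\sigma, \sigma') \geq 0$, and the decrement balls intersect iff $|\sigma - \tau_{*}|_1 = b \leq t$ and $|\sigma' - \tau_{*}|_1 = a \leq t$. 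Both models thus collapse to the identical condition $\max(a,b) \leq t$, which yields the equivalence.

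I expect the main obstacle to be the bookkeeping of the first step rather than the $\ell_1$-estimate: one must verify that the word-to-signature correspondence is genuinely a bijection turning ball intersection among words into ball intersection in $\mathbb{N}_0^k$, and in particular that every increment of every coordinate is realizable by some tandem duplication and conversely that every admissible decrement is realizable by a tandem deletion. Once that dictionary is secured, the extremal (min/max) argument is routine, and the symmetry $a \leftrightarrow b$ between the two models makes the equivalence immediate.
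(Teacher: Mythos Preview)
Your proof is correct and follows essentially the same route as the paper: pass to the $(\text{root},\text{signature})$ representation, note that tandem duplications and deletions become unit increments and decrements of the signature while fixing the root, and reduce ball intersection to an $\ell_1$ condition on the signatures. Your explicit min/max argument giving the criterion $\max(a,b)\le t$ is a slightly more detailed version of the paper's terse equivalent condition $|\sigma_\ell(\ve{v}_x)-\sigma_\ell(\ve{v}_y)|_1\le 2t$; the two coincide here because both codewords have length $n$ and the same root, which forces $|\sigma|_1=|\sigma'|_1$ and hence $a=b$.
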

\begin{proof}
	By Definition \ref{eq:t:error:correcting} it is sufficient to show that the tandem duplication error balls for all $\ve{x}, \ve{y} \in \Zq^n$ intersect if and only if their tandem deletion error balls intersect, i.e.
	\[ B^{\tau_\ell}(\ve{x}) \cap B^{\tau_\ell}(\ve{y}) \neq \emptyset \Longleftrightarrow B^{\tau^D_\ell}(\ve{x}) \cap B^{\tau^D_\ell}(\ve{y}) \neq \emptyset \, \forall \, \ve{x}, \ve{y} \in \Zq^n, \]
	in order to prove Theorem \ref{thm:equivalence:tandem:duplication:deletion}. As illustrated in the previous section, a tandem duplication does not change the $\ell$-trunk of a word and increases one entry of the $\ell$-zero signature by $1$. Similarly, a tandem deletion does not change the $\ell$-trunk of a word and decreases one entry of the $\ell$-zero signature by $1$. Therefore, writing $|\bullet|_1$ as the $\ell_1$-norm, $\phi_{\ell}(\ve{x}) = (\ve{u}_x, \ve{v}_x)$ and $\phi_{\ell}(\ve{y}) = (\ve{u}_y, \ve{v}_y)$, it follows.
	\begin{align*}
		&B_t^{\tau_\ell}(\ve{x}) \cap B_t^{\tau_\ell}(\ve{y}) \neq \emptyset \\
		\Longleftrightarrow\,&(\ve{u}_x, \mu_\ell(\ve{v}_x)) = (\ve{u}_y, \mu_\ell(\ve{v}_y)) \land |\sigma_\ell(\ve{v}_x)-\sigma_\ell(\ve{v}_y|_1 \leq 2t \\
		\Longleftrightarrow\,& B_t^{\tau^D_\ell}(\ve{x}) \cap B_t^{\tau^D_\ell}(\ve{y}) \neq \emptyset.
	\end{align*}
	 \qed
\end{proof}
\subsection{Relationship between Palindromic Duplication and Deletion codes}
For palindromic duplication errors, an equivalence similar to Theorem \ref{thm:equivalence:tandem:duplication:deletion} does not hold. A counter example for $\ell=2, t=1$ that shows that not every palindromic \emph{deletion} correcting code is palindromic \emph{duplication} correcting is presented here.
\begin{example}
	Let $\mathcal{C} = \{\ve{c}_1, \ve{c}_2\}$ with $\ve{c}_1 = (010101)$ and $\ve{c}_2 = (010011)$. $\mathcal{C}$ is single palindromic \emph{deletion} correcting, since $B^{\rho_2^D}_1(\ve{c}_1) = \{\ve{c}_1\}$ and $B^{\rho_2^D}_1(\ve{c}_2) = \{\ve{c}_2, (0101)\}$ and thus $B^{\rho_2^D}_1(\ve{c}_1) \cap B^{\rho_2^D}_1(\ve{c}_2)= \emptyset$. On the other hand, $\mathcal{C}$ is not single palindromic \emph{duplication} correcting since $B^{\rho_2}_1(\ve{c}_1) \cap B^{\rho_2}_1(\ve{c}_2) = \{(01001101)\}$.
\end{example}
The following example illustrates that also not every palindromic \emph{duplication} correcting code is palindromic \emph{deletion} correcting.
\begin{example}
	Consider the code $\mathcal{C} = \{\ve{c}_1, \ve{c}_2\}$ with $\ve{c}_1 = (011010)$ and $\ve{c}_2 = (011110)$. $\mathcal{C}$ is single palindromic \emph{duplication} correcting, since 
	\begin{align*}
		B^{\rho_2}_1(\ve{c}_1) &= \{\ve{c}_1, (01101010), (01111010), (01100110), (01101100), (01101001)\}, \\ B^{\rho_2}_1(\ve{c}_2) &= \{\ve{c}_2, (01101110), (01111110), (01111001)\},
	\end{align*}
	and thus $B^{\rho_2}_1(\ve{c}_1) \cap B^{\rho_2}_1(\ve{c}_2)= \emptyset$. However, $\mathcal{C}$ is not single palindromic \emph{deletion} correcting since $B^{\rho_2^D}_1(\ve{c}_1) \cap B^{\rho_2^D}_1(\ve{c}_2) = \{(0110)\}$.
\end{example}
\section{Upper Bounds on the Code Cardinalities} \label{sec:upper:bounds:cardinalities}
One of the most basic problems in coding theory is finding the largest code correcting a given type of error $\epsilon$. In general, this problem can be stated as
\[
A^\epsilon(n,t) = \underset{\mathcal{C} \subseteq \Zq^n}{\max} \, |\mathcal{C}|, \quad \mathrm{s.t.} \quad B^\epsilon_t(\ve{c}_1) \cap B^\epsilon_t(\ve{c}_2) = \emptyset, \, \forall \, \ve{c}_1, \ve{c}_2 \in \mathcal{C},
\]
where $A^\epsilon(n,t)$ denotes the maximum cardinality of a code of length $n$ that corrects $t$ errors of type $\epsilon$. Due to the fact that the exact number $A^\epsilon(n,t)$ is often not known, one is interested in finding tight upper and lower bounds onto this number. In the following, we derive non-asymptotic upper bounds for tandem and palindromic duplication errors by using tools from hypergraph theory similar to the approach from \cite{Fazeli15,Kulkarni13}.
\subsection{Upper Bound for General Error Types}
Consider the hypergraph $\mathcal{H}_{n,t}^\epsilon = \left(\mathcal{V}_{n,t}^\epsilon, \mathcal{E}_{n,t}^\epsilon\right)$ with vertices $\mathcal{V}_{n,t}^\epsilon$ and hyperedges $\mathcal{E}_{n,t}^\epsilon$,
\begin{align*}
	\mathcal{V}_{n,t}^\epsilon &= \bigcup_{\ve{x} \in \Zq^n} B_t^\epsilon(\ve{x}), \\
	\mathcal{E}_{n,t}^\epsilon &= \{ B_t^\epsilon(\ve{x}), \ve{x} \in \Zq^n \},
\end{align*}
that is associated with a channel of at worst $t$ errors of type $\epsilon$ in words of length $n$. The vertices $\mathcal{V}_{n,t}^\epsilon$ of the hypergraph $\mathcal{H}_{n,t}^\epsilon = \left(\mathcal{V}_{n,t}^\epsilon, \mathcal{E}_{n,t}^\epsilon\right)$ consist of all possible channel inputs and outputs, while the hyperedges $\mathcal{E}_{n,t}^\epsilon$ represent possible channel outcomes for a selected channel input $\ve{x} \in \Zq^n$. The following definitions of hypergraph transverals and matchings are naturally associated with problems in coding theory and can be found in, e.g. \cite{Hansen81}.
\begin{definition}[Hypergraph matching]
	A matching of a hypergraph $\mathcal{H} = (\mathcal{V}, \mathcal{E})$ is a set of disjoint hyperedges $\mathcal{M} \subseteq \mathcal{E}$, such that $E_i \cap E_j = \emptyset$ for all $E_i, E_j \in \mathcal{M}$.
\end{definition}
With this definition, a matching is described by a function $M: \mathcal{E} \rightarrow \{0,1\}$, that satisfies
\[\sum_{E \in \mathcal{E}: \ve{\nu} \in E } M(E) \leq 1, \quad \forall \, \ve{\nu} \in \mathcal{V}, \]
where $M(E)=1$ indicates that a hyperedge $E$ is included in the matching and $M(E)=0$ means that the hyperedge is not included in that matching $M$.
\begin{definition}[Hypergraph transversal]
	A transversal of a hypergraph $\mathcal{H} = (\mathcal{V}, \mathcal{E})$ is a set of vertices $\mathcal{T} \subseteq \mathcal{V}$, such that for each hyperedge $E \in \mathcal{E}$, there exists a $\ve{\nu} \in E$ with $\ve{\nu} \in \mathcal{T}$.
\end{definition}
Similarly to the hypergraph matching, a transversal can therefore be described by a function $T: \mathcal{V} \rightarrow \{0,1\}$, with
\begin{equation} \label{eq:transversal:condition}
\sum_{\ve{\nu} \in E} T(\ve{\nu}) \geq 1, \quad \forall \, E \in \mathcal{E},
\end{equation} %
where $T(\ve{\nu})$ indicates, whether a vertex $\ve{\nu}$ is included in the transversal. With these definitions, finding the code with maximum cardinality is equivalent to finding the largest matching $\mathcal{M}^\star$. The solution to this problem is referred to as the \emph{matching number} and is denoted by $\nu(\mathcal{H}) = |\mathcal{M}^\star|$. Consequently, $\nu(\mathcal{H}_{n,t}^\epsilon) = A^\epsilon(n,t)$. Simplifying the computationally intensive problem of finding the exact matching number, it has been shown in \cite{Kulkarni13} that the matching number is upper bounded by any \emph{fractional} transversal, which is a function $T^*: \mathcal{V} \rightarrow \mathbb{R}_0^+ $, satisfying the transversal condition \eqref{eq:transversal:condition}. Hence, we restate the following Lemma from \cite{Fazeli15,Kulkarni13} which gives an upper bound on the maximum code cardinality $A^\epsilon(n,t)$.
\begin{lemma}\label{lemma:cardinality:upper:bound}
	Let $\mathcal{H}^\epsilon_{n,t} = (\mathcal{V}_{n,t}^\epsilon, \mathcal{E}_{n,t}^\epsilon)$ be a hypergraph that is associated with $t$ errors of type $\epsilon$. The maximum code cardinality $A^\epsilon(n,t)$ for a code of length $n$ correcting $t$ errors of type $\epsilon$ is upper bounded by
	\[
	A^\epsilon(n,t) \leq \sum_{\ve{\nu} \in \mathcal{V}} T^\epsilon(\ve{\nu}),
	\]
	where $T^\epsilon: \mathcal{V}_{n,t}^\epsilon \rightarrow \mathbb{R}_0^+$ is a fractional transversal which satisfies
	\begin{equation} \label{eq:fractional:transversal:condition}
		\begin{aligned}
		\sum_{\ve{\nu} \in B^\epsilon_t(\ve{x})} &T^\epsilon(\ve{\nu}) \geq 1, \quad \forall \, \ve{x} \in \Zq^n,\\
		&T^\epsilon(\ve{\nu}) \geq 0, \quad \forall \, \ve{\nu} \in \mathcal{V}.
		\end{aligned}
	\end{equation}
\end{lemma}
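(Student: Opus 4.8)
The plan is to prove the standard weak-duality inequality relating the integral matching number to any fractional transversal, and then to invoke the identity $\nu(\mathcal{H}_{n,t}^\epsilon) = A^\epsilon(n,t)$ established above. To this end I would let $\mathcal{M}^\star$ denote a maximum matching of $\mathcal{H}_{n,t}^\epsilon$, so that $|\mathcal{M}^\star| = \nu(\mathcal{H}_{n,t}^\epsilon) = A^\epsilon(n,t)$, and fix an arbitrary fractional transversal $T^\epsilon$ satisfying \eqref{eq:fractional:transversal:condition}.

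First I would apply the fractional-transversal constraint to each hyperedge appearing in the matching. Every $E \in \mathcal{M}^\star$ has the form $B_t^\epsilon(\ve{x})$ for some $\ve{x}\in\Zq^n$, so \eqref{eq:fractional:transversal:condition} yields $\sum_{\ve{\nu}\in E} T^\epsilon(\ve{\nu}) \geq 1$. Summing this over all $|\mathcal{M}^\star|$ edges of the matching gives
\[
A^\epsilon(n,t) = |\mathcal{M}^\star| \;\leq\; \sum_{E\in\mathcal{M}^\star}\; \sum_{\ve{\nu}\in E} T^\epsilon(\ve{\nu}).
\]

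The key step is then to exchange the order of summation on the right. Because a matching is by definition a collection of pairwise disjoint hyperedges, each vertex $\ve{\nu}\in\mathcal{V}_{n,t}^\epsilon$ belongs to at most one edge of $\mathcal{M}^\star$. Writing $c(\ve{\nu})\in\{0,1\}$ for the number of matched edges containing $\ve{\nu}$, the double sum equals $\sum_{\ve{\nu}} c(\ve{\nu})\,T^\epsilon(\ve{\nu})$; since $c(\ve{\nu})\leq 1$ and $T^\epsilon(\ve{\nu})\geq 0$, enlarging each $c(\ve{\nu})$ to $1$ can only increase the total, so
\[
\sum_{E\in\mathcal{M}^\star}\; \sum_{\ve{\nu}\in E} T^\epsilon(\ve{\nu}) \;\leq\; \sum_{\ve{\nu}\in\mathcal{V}} T^\epsilon(\ve{\nu}).
\]
Combining the two displays proves the asserted bound.

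Since this is nothing more than weak LP duality between the matching and covering relaxations, I do not expect a genuine obstacle; the only delicate point is the double-counting step, where both the disjointness of the matching and the nonnegativity of $T^\epsilon$ are used simultaneously, and dropping either one would invalidate the passage to $\sum_{\ve{\nu}} T^\epsilon(\ve{\nu})$. The useful feature to stress is that the conclusion holds for \emph{every} fractional transversal, which is exactly what the later sections exploit: one is free to tailor $T^\epsilon$---for example, making it constant across all vertices sharing a common invariant such as the $\ell$-trunk of the root---so as to render $\sum_{\ve{\nu}} T^\epsilon(\ve{\nu})$ both small and explicitly computable.
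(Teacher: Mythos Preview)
Your argument is correct: this is precisely the weak-duality inequality between the matching number and any fractional transversal, and your double-counting step using disjointness of $\mathcal{M}^\star$ together with nonnegativity of $T^\epsilon$ is the right way to pass from $\sum_{E\in\mathcal{M}^\star}\sum_{\ve{\nu}\in E}T^\epsilon(\ve{\nu})$ to $\sum_{\ve{\nu}\in\mathcal{V}}T^\epsilon(\ve{\nu})$.

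For comparison, the paper does not actually supply its own proof of this lemma; it is stated as a restatement of a result from \cite{Fazeli15,Kulkarni13}, with only the preceding sentence noting that ``the matching number is upper bounded by any fractional transversal.'' Your write-up thus provides exactly the short self-contained justification that the paper omits, and it matches the argument one finds in those references.
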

With Lemma \ref{lemma:cardinality:upper:bound} it is possible to formulate upper bounds on codes correcting tandem duplication errors by using an appropriate fractional transversal. However, notice that the transversal sum in Lemma \ref{lemma:cardinality:upper:bound} is formulated over all words $\ve{\nu} \in \mathcal{V}$, where $\mathcal{V}$ contains words of length $n, n+\ell, \dots, n+t\ell$ for the duplication errors. If, in contrary, bounds for deletion errors are derived, the vertices of the hypergraph are words of length $n, n-\ell, \dots, n-t\ell$. This indicates that a bound based on the deletion errors is smaller than for the corresponding duplication error. Indeed, this is observed also for classical deletions and insertions where this method provides a better bound for deletions than for insertions. As shown in Section \ref{sec:relationship:duplication:deletion:codes} it holds that $A^{\tau_\ell}(n,t) = A^{\tau^D_\ell}(n,t)$ and. Therefore, a fractional transversal for the hypergraphs associated with tandem \emph{deletion} errors provides valid upper bounds onto the size of tandem \emph{duplication} error correcting codes.

In the following we show how to formulate fractional transversals that yield upper bounds for tandem deletion duplication error correcting codes. The next definition will be helpful for the upcoming steps.
\begin{definition}[Irreducible words]
	For an error type $\epsilon \in \{\tau_\ell, \rho_\ell\}$, we define the set of all $t$-irreducible words to be
	\[ \mathrm{IRR}^\epsilon_t = \{ \ve{\nu} \in \Zq^*: S^{\epsilon^D}_t(\ve{\nu}) = \emptyset \}. \]
\end{definition}
Note that in contrast to substitution errors and conventional deletion errors, it is possible that the error spheres for duplication deletion errors are empty. Therefore, the fractional transversal that will serve for upper bounding the sizes of our codes has to be formulated carefully and will be only non-zero for vectors that are either irreducible or contained in the error sphere of a word $\ve{x} \in \Zq^n$.
\begin{lemma} \label{lemma:fractional:transversal:tandem}
	For some fixed $t$ and $\ell$, the function $T^{\tau_\ell^D}: \mathcal{V}^{\tau_\ell^D}_{n,t} \rightarrow \mathbb{R}_0^+$
	\[ T^{\tau_\ell^D}(\ve{\nu}) = \left\{ \begin{array}{cl}
	0, & \mathrm{if} \, \ve{\nu} \notin \Zq^{n-t\ell} \land \ve{\nu} \notin \mathrm{IRR}^{\tau_\ell}_t \\
	1, & \mathrm{if} \, \ve{\nu} \in \mathrm{IRR}^{\tau_\ell}_t \\
	\big|S^{\tau^D_\ell}_t(\ve{\nu})\big|^{-1}, & \mathrm{if} \, \ve{\nu} \in \Zq^{n-t\ell} \land \ve{\nu} \notin \mathrm{IRR}^{\tau_\ell}_t \\
	\end{array} \right.. \]
	is a fractional transversal for the hypergraph $\mathcal{H}^{\tau_\ell^D}_{n,t} = (\mathcal{V}_{n,t}^{\tau_\ell^D}, \mathcal{E}_{n,t}^{\tau_\ell^D})$ associated with $t$ tandem deletion errors of length $\ell$.
\end{lemma}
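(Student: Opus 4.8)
The plan is to check the two requirements in \eqref{eq:fractional:transversal:condition}. Non-negativity is immediate: each branch of $T^{\tau_\ell^D}$ returns $0$, $1$, or $|S^{\tau_\ell^D}_t(\ve\nu)|^{-1}$, and the third branch is evaluated only on vectors $\ve\nu \notin \mathrm{IRR}^{\tau_\ell}_t$, for which $S^{\tau_\ell^D}_t(\ve\nu) \neq \emptyset$, so the reciprocal is positive and well defined. The three branches partition $\mathcal{V}^{\tau_\ell^D}_{n,t}$, since the middle branch is the only one admitting irreducible vectors. The substance of the proof is the covering condition $\sum_{\ve\nu \in B^{\tau_\ell^D}_t(\ve x)} T^{\tau_\ell^D}(\ve\nu) \geq 1$ for each fixed $\ve x \in \Zq^n$. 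I would argue throughout in the signature picture of Section \ref{ss:tandem:duplication:preliminaries}: a tandem deletion fixes the root $(\ve u_x, \mu_\ell(\ve v_x))$ and lowers one positive entry of $\sigma_\ell(\ve v_x)$ by one, so a word with a fixed root is determined by its $\ell$-zero signature, and reaching a word by exactly $t$ deletions amounts to subtracting some $\ve e \in \mathbb{N}_0^{p+1}$ with $|\ve e|_1 = t$ and $\ve 0 \leq \ve e \leq \sigma_\ell(\ve v_x)$, where $p+1$ is the signature length.

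I would then split into two cases. If $B^{\tau_\ell^D}_t(\ve x)$ contains some irreducible word $\ve\nu \in \mathrm{IRR}^{\tau_\ell}_t$ (in particular when $\ve x$ itself is irreducible, which occurs exactly when $|\sigma_\ell(\ve v_x)|_1 < t$), then $T^{\tau_\ell^D}(\ve\nu) = 1$ and all remaining terms are non-negative, so the sum is at least $1$.

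In the complementary case no vertex of $B^{\tau_\ell^D}_t(\ve x)$ is irreducible, so $\ve x \notin \mathrm{IRR}^{\tau_\ell}_t$ and hence $S^{\tau_\ell^D}_t(\ve x) \neq \emptyset$. Because each tandem deletion shortens a word by $\ell$ symbols, $\Zq^{n-t\ell} \cap B^{\tau_\ell^D}_t(\ve x) = S^{\tau_\ell^D}_t(\ve x)$, and every such vector is non-irreducible, so it receives weight $|S^{\tau_\ell^D}_t(\ve\nu)|^{-1}$ from the third branch while all other vertices of the ball receive weight $0$. It therefore suffices to show $\sum_{\ve\nu \in S^{\tau_\ell^D}_t(\ve x)} |S^{\tau_\ell^D}_t(\ve\nu)|^{-1} \geq 1$, and the crux, which I expect to be the main obstacle, is the monotonicity bound $|S^{\tau_\ell^D}_t(\ve\nu)| \leq |S^{\tau_\ell^D}_t(\ve x)|$ for each $\ve\nu \in S^{\tau_\ell^D}_t(\ve x)$. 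To establish it I would note that such a $\ve\nu$ shares the root of $\ve x$, has signature length $p+1$, and has signature $\sigma_\ell(\ve v_x) - \ve e \leq \sigma_\ell(\ve v_x)$ componentwise; since $|S^{\tau_\ell^D}_t(\ve\nu)|$ equals the number of $\ve e'$ with $|\ve e'|_1 = t$ and $\ve 0 \leq \ve e' \leq \sigma_\ell(\ve v_\nu)$, and this counting set is nested inside the corresponding set for $\ve x$, the bound follows. Each summand is then at least $|S^{\tau_\ell^D}_t(\ve x)|^{-1}$, and summing over the $|S^{\tau_\ell^D}_t(\ve x)|$ members of the sphere gives a total of at least $1$, which finishes the verification.
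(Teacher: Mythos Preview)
Your proposal is correct and follows essentially the same approach as the paper: both split on whether $B^{\tau_\ell^D}_t(\ve x)$ meets $\mathrm{IRR}^{\tau_\ell}_t$, and in the non-irreducible case both reduce to the monotonicity inequality $|S^{\tau_\ell^D}_t(\ve\nu)| \leq |S^{\tau_\ell^D}_t(\ve x)|$, proved via the signature picture (shared root, equal signature length, componentwise-dominated signature). Your write-up is in fact slightly more explicit than the paper's about non-negativity, well-definedness of the reciprocal, and the identification $\Zq^{n-t\ell} \cap B^{\tau_\ell^D}_t(\ve x) = S^{\tau_\ell^D}_t(\ve x)$.
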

\begin{proof}
	To show that $T^{\tau_\ell^D}$ is a valid transversal, we need to proof that the fractional transversal condition \eqref{eq:fractional:transversal:condition} is satisfied for all $\ve{x} \in \Zq^n$. Consider first the case that $B^{\tau^D_\ell}_t(\ve{x}) \cap \mathrm{IRR}^{\tau_\ell}_t \neq \emptyset$, which means that the error ball around $\ve{x}$ contains an irreducible word. Then the transversal condition is directly fulfilled, as $T^{\tau_\ell^D}(\ve{\nu}) = 1$ for at least one element $\ve{\nu} \in B^{\tau^D_\ell}_t(\ve{x}) $. For the case $B^{\tau^D_\ell}_t(\ve{x}) \cap \mathrm{IRR}^{\tau^D}_t = \emptyset$, we will first show that
	\begin{equation} \label{eq:tandem:monotonicity}
		\big|S^{\tau^D_\ell}_t(\ve{\nu})\big| \leq \big|S^{\tau^D_\ell}_t(\ve{x})\big|
	\end{equation}
	for all $\ve{\nu} \in B^{\tau^D_\ell}_t(\ve{x})$. This inequality is known as the monotonicity property \cite{Fazeli15} and can be proven using the expression for the error sphere size, which will be derived in Lemma \ref{lemma:tandem:deletion:sphere}. Since the length of the $\ell$-zero signature of $\ve{x}$ and $\ve{\nu}$ is the same, i.e. $|\sigma_\ell(\ve{v}_x)| = |\sigma_\ell(\ve{v}_\nu)|$, and $\sigma_\ell(\ve{v}_\nu) \leq \sigma_\ell(\ve{v}_x)$, inequality in \eqref{eq:tandem:monotonicity} follows.
	Hence, the transversal sum satisfies
	\[ \sum_{\ve{\nu} \in S^{\tau^D}_t(\ve{x})} \big|S^{\tau^D_\ell}_t(\ve{\nu})\big|^{-1} \geq |S^{\tau^D_\ell}_t(\ve{x})| \underset{\ve{\nu} \in S^{\tau_\ell^D}_t(\ve{x})}{\min} \big|S^{\tau^D_\ell}_t(\ve{\nu})\big|^{-1} \geq 1. \]
	Notice that $\big|S^{\tau^D_\ell}_t(\ve{\nu})\big|^{-1}$ is well defined, as $B^{\tau^D_\ell}_t(\ve{x})$ contains no irreducible words in this case.
	\qed
\end{proof}
A key ingredient for the proof of Lemma \ref{lemma:fractional:transversal:tandem} is the monotonicity property \eqref{eq:tandem:monotonicity} of tandem deletion errors, which means that the deletion sphere sizes for all words in a deletion sphere are smaller than the size of the parent sphere. Computing the overall transversal sum with the functions as given in Lemma \ref{lemma:fractional:transversal:tandem}, we obtain the following upper bound on the maximum cardinalities of tandem duplication correcting codes.
\begin{corollary} \label{cor:cardinality:upper:bound}
	Denote by $N^{\tau_\ell^D}_t(n,i) = |\{ \ve{x} \in \Zq^{n}: |S^{\tau_\ell^D}_t(\ve{x})| = i \}|$. Then the maximum cardinality of any $t$-tandem duplication correcting code is upper bounded by
	\[ A^{\tau_\ell}(n,t) \leq \sum_{i=0}^{t} |\mathrm{IRR}^{\tau_\ell}_t \cap \Zq^{n-i\ell}| + \sum_{i=1}^{i_{\max}} \frac{N^{\tau^D_\ell}_t(n-t\ell,i)}{i},  \]
	where $i_{\max}$ is the maximum error sphere size for $t$ tandem deletion errors.
\end{corollary}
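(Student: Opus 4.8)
The plan is to feed the transversal from Lemma~\ref{lemma:fractional:transversal:tandem} into the generic bound of Lemma~\ref{lemma:cardinality:upper:bound} and then evaluate the resulting sum in closed form. Concretely, Lemma~\ref{lemma:fractional:transversal:tandem} certifies that $T^{\tau_\ell^D}$ is a valid fractional transversal of $\mathcal{H}^{\tau_\ell^D}_{n,t}$, so Lemma~\ref{lemma:cardinality:upper:bound} gives $A^{\tau^D_\ell}(n,t) \leq \sum_{\ve{\nu} \in \mathcal{V}^{\tau_\ell^D}_{n,t}} T^{\tau_\ell^D}(\ve{\nu})$; combined with the equivalence $A^{\tau_\ell}(n,t) = A^{\tau^D_\ell}(n,t)$ from Theorem~\ref{thm:equivalence:tandem:duplication:deletion}, this already bounds the tandem \emph{duplication} code. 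All that remains is to count.

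First I would split the vertex set along the three cases defining $T^{\tau_\ell^D}$. The deletion hypergraph has vertices of lengths $n, n-\ell, \dots, n-t\ell$, and $T^{\tau_\ell^D}$ is supported only on two disjoint families: the irreducible words, which carry weight $1$, and the \emph{non}-irreducible words of length exactly $n-t\ell$, which carry weight $|S^{\tau^D_\ell}_t(\ve{\nu})|^{-1}$. Since every word of length $n-i\ell$ with $i \leq t$ can be reached from some length-$n$ word by $i$ tandem deletions (one can always duplicate a word of length at least $\ell$), each such word genuinely appears in $\mathcal{V}^{\tau_\ell^D}_{n,t}$, so no supported vertex is lost.

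Summing the weight-$1$ contributions over all admissible lengths collects every irreducible word of length $n-i\ell$ for $i=0,1,\dots,t$, giving the first term $\sum_{i=0}^{t} |\mathrm{IRR}^{\tau_\ell}_t \cap \Zq^{n-i\ell}|$. For the second family I would bin the non-irreducible words of length $n-t\ell$ by their sphere size: a word $\ve{\nu} \in \Zq^{n-t\ell}$ is non-irreducible exactly when $|S^{\tau^D_\ell}_t(\ve{\nu})| \geq 1$, there are by definition $N^{\tau^D_\ell}_t(n-t\ell,i)$ such words of sphere size $i$, and each contributes $1/i$. Summing $i$ from $1$ to $i_{\max}$ yields the second term $\sum_{i=1}^{i_{\max}} N^{\tau^D_\ell}_t(n-t\ell,i)/i$, completing the identity.

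The step demanding the most care --- the main, if minor, obstacle --- is the bookkeeping that prevents double counting between the two terms. An irreducible word of length $n-t\ell$ has an empty deletion sphere, hence sphere size $0$; the definition of $T^{\tau_\ell^D}$ routes it into the weight-$1$ case (so its inverse sphere size is never evaluated), and the second sum starts at $i=1$, so it is counted once in the first term and never in the second. Verifying that every positively weighted vertex lands in exactly one of the two families, and that the two enumerations are exhaustive over those families, is what closes the argument.
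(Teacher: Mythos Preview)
Your proposal is correct and follows exactly the route the paper intends: the corollary is stated immediately after Lemma~\ref{lemma:fractional:transversal:tandem} with the remark that one simply ``comput[es] the overall transversal sum,'' and your argument does precisely that, invoking Theorem~\ref{thm:equivalence:tandem:duplication:deletion} and Lemma~\ref{lemma:cardinality:upper:bound} as the paper sets up. Your bookkeeping on the vertex set (that every word of length $n-i\ell$ is reachable by $i$ deletions from some length-$n$ word, and that the $i=0$ bin at length $n-t\ell$ is absorbed into the first sum) is more explicit than anything the paper writes, but it is the right justification and fills in what the paper leaves implicit.
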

Note that it has been shown in \cite{Jain16} that the number of irreducible words $\mathrm{IRR}^{\tau_\ell}_t$ is connected to the number of run-length-limited words for tandem duplication errors. Therefore, for large $n$, the second summand in Corollary \ref{cor:cardinality:upper:bound} dominates the upper bound.
\subsection{Bound for Tandem Deletions}
To find explicit expressions for the bound stated in Corollary \ref{cor:cardinality:upper:bound}, we have to compute the number of words of length $n$ with sphere size $i$ for tandem and palindromic deletion errors. In the following, we will find combinatorial expressions for these numbers that can then directly be used to obtain code size upper bounds. Expressions for the sphere sizes of the discussed error types can be found in Appendix \ref{sec:error:spheres:tandem:palindromic:dupliations:deletions}.
\begin{lemma} \label{lemma:tandem:deletion:number:words:sphere:size} The number of words of length $n$ with tandem deletion sphere size $i$ is given by
	\begin{align*}
	N^{\tau_\ell^D}_1(n,i) &= |\{ \ve{x} \in \Zq^{n}: |S^{\tau_\ell^D}_1(\ve{x})| = i \}| = \\
	&= \sum_{\nu=i}^{\left\lfloor\frac{n}{\ell}\right\rfloor-1}  \sum_{\omega=i-1}^{n-(\nu+1) \ell} q^\ell A(n - (\nu+1) \ell,\ell-1,\omega) \binom{\omega+1}{i} \binom{\nu-1}{i-1},
	\end{align*}
	where $A(n',\ell',\omega)$ is the number of all words $\ve{x} \in \mathbb{Z}_q^{n'}$ that have zero-runs of length at most $\ell'$ and Hamming weight $\omega$. 
\end{lemma}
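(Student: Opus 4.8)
The plan is to count words by passing to the $\ell$-step derivative, where a single tandem deletion becomes the removal of one ``block'' of $\ell$ zeros and the sphere size admits a transparent combinatorial description. First I would invoke the bijection $\phi_\ell : \mathbb{Z}_q^n \to \mathbb{Z}_q^\ell \times \mathbb{Z}_q^{n-\ell}$, $\ve{x} \mapsto (\ve{u}_x, \ve{v}_x)$, from Definition~\ref{def:l:step:derivative}. Since a tandem deletion leaves the prefix $\ve{u}_x$ untouched and acts only on $\ve{v}_x$, the sphere size $|S^{\tau_\ell^D}_1(\ve{x})|$ depends on $\ve{v}_x$ alone; hence $\ve{u}_x$ contributes a free factor of $q^\ell$ and it remains to count the $\ve{v} \in \mathbb{Z}_q^{n-\ell}$ of a prescribed sphere size.

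Next I would establish that $|S^{\tau_\ell^D}_1(\ve{x})|$ equals the number of nonzero entries of the zero signature $\sigma_\ell(\ve{v}_x)$. Recall, from the discussion preceding the statement together with the sphere-size expression of Lemma~\ref{lemma:tandem:deletion:sphere}, that a single tandem deletion decreases exactly one entry of $\sigma_\ell(\ve{v}_x)$ by one, that an entry may be decreased precisely when it is at least $1$, and that all deletions inside a single zero-run of $\ve{v}_x$ produce the same word while decreasing two different entries yields two different signatures and hence two different words. Thus the sphere size is exactly the number $i$ of strictly positive signature coordinates, and counting words $\ve{v}$ of sphere size $i$ reduces to counting $\ve{v}$ whose signature has exactly $i$ positive entries (for $i \geq 1$).

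To carry out this count I would use the bijective decomposition of $\ve{v}$ into its trunk $\mu_\ell(\ve{v})$ and its signature $\sigma_\ell(\ve{v})$, stratified by the Hamming weight $\omega$ and by $\nu = |\sigma_\ell(\ve{v})|_1$, the total signature weight. The trunk is precisely a word whose zero-runs have length at most $\ell-1$ and whose Hamming weight is $\omega$; since passing from $\ve{v}$ to its trunk removes $\ell s_j$ zeros from the $j$-th run, a total of $\ell\nu$ zeros, the trunk has length $(n-\ell) - \ell\nu = n-(\nu+1)\ell$, so there are exactly $A\big(n-(\nu+1)\ell, \ell-1, \omega\big)$ admissible trunks (this quantity already incorporates the choices of the nonzero symbols). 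The signature is then any vector in $\mathbb{N}_0^{\omega+1}$ of weight $\nu$ with exactly $i$ positive entries: choosing the $i$ positive positions gives $\binom{\omega+1}{i}$, and distributing $\nu$ into $i$ positive parts gives $\binom{\nu-1}{i-1}$ by stars and bars.

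Multiplying the trunk and signature counts, reinserting the factor $q^\ell$, and summing over the admissible ranges $i \leq \nu \leq \lfloor n/\ell\rfloor - 1$ (from $\nu \geq i$ and trunk length at least $0$) and $i-1 \leq \omega \leq n-(\nu+1)\ell$ (from $\omega+1 \geq i$ and $\omega$ at most the trunk length) yields the claimed identity. The main obstacle is the bookkeeping in the signature decomposition: verifying that $(\mu_\ell, \sigma_\ell)$ is a genuine bijection onto pairs of a trunk with zero-runs of length at most $\ell-1$ and a signature of length $\omega+1$, that the trunk length collapses exactly to $n-(\nu+1)\ell$, and that the two summation ranges coincide with the support of the binomial factors; once these are in place, the formula follows by direct multiplication.
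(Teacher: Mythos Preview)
Your proposal is correct and follows essentially the same approach as the paper: pass to the $\ell$-step derivative to split off the free factor $q^\ell$, identify the sphere size with $wt_{\mathrm{H}}(\sigma_\ell(\ve{v}_x))$ (the paper cites Corollary~\ref{cor:single:tandem:deletion:sphere} for this), stratify by $\nu=|\sigma_\ell(\ve{v})|_1$ and $\omega=wt_{\mathrm{H}}(\mu_\ell(\ve{v}))$, and then count trunks by $A(n-(\nu+1)\ell,\ell-1,\omega)$ and signatures with support size $i$ by $\binom{\omega+1}{i}\binom{\nu-1}{i-1}$. The paper's proof is slightly terser on the bijection bookkeeping you flag as the main obstacle, but the decomposition and the counting are identical.
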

\begin{proof}
	We consider the $\ell-$step derivative $\phi_\ell(\ve{x}) = (\ve{u}, \ve{v})$. According to Corollary~\ref{cor:single:tandem:deletion:sphere}, the size of the single tandem deletion sphere is given by $|S^{\tau^D_\ell}_1(\ve{x})| = wt_\mathrm{H}(\sigma_\ell(\ve{v}))$ and we therefore want to find the number of words $\ve{x} \in \mathbb{Z}_q^n$ with $wt_\mathrm{H}(\sigma_\ell(\ve{v})) = i$.
	
	Let $\nu$ be the number of length $\ell$ tandem duplications in $\ve{x}$, i.e. $|\sigma_\ell(\ve{v})|_1 = \nu$. Further let $\mathcal{J}$ denote the support set of $\sigma_\ell(\ve{v})$, i.e. $\mathcal{J} = \{m : \sigma(\ve{v})_m \neq 0\}$, with $|\mathcal{J}| = i$. The number of possibilities to distribute the duplications into $\sigma_\ell(\ve{v})$ for a given support $\mathcal{J}$ is equal to the number of solutions of
	\begin{equation}
	\sum_{j=1}^{i}y_j = \nu, \quad y_j \in \mathbb{N}, \,\,\forall \,\, 1\leq j \leq i.
	\end{equation}
	This number is given by $\binom{\nu-1}{i-1}$ \cite[Lemma 2.2]{Kulkarni13}. Further, let $\omega$ be the Hamming weight of the $\ell$-trunk, i.e. $wt_{\mathrm{H}}(\mu_\ell(\ve{v})) = \omega$ and thus $|\sigma_\ell(\ve{v})| = \omega+1$, which corresponds to the number of unambiguous positions for tandem duplications of length $\ell$. The number of possible support sets $\mathcal{J}$ of $\sigma_\ell(\ve{v})$ with $|\mathcal{J}| = i$ then is $\binom{\omega+1}{i}$. The vector $\mu_\ell(\ve{v})$ can be chosen to be any $q$-ary vector of length $n-(\nu+1)\ell$ that has zero-runs of length at most $\ell-1$ and Hamming weight $\omega$. The number of such vectors is given by $A(n - (\nu+1) \ell,\ell-1,\omega)$. Finally, the first $\ell$ symbols $\ve{u}\in \mathbb{Z}_q^\ell$ can be chosen arbitrarily and thus have $q^\ell$ possibilities. \qed
\end{proof}
It can be deduced from the results in \cite{Kurmaev11} that for $\omega\geq 2$ the number of all $q$-ary vectors of length $n'$, maximum zero-run length $\ell'$ and weight $\omega$ is given by
\begin{equation*}
A(n',\ell',\omega) = (q-1)^\omega \left\{ \begin{array}{ll} \multirow{2}{*}{$n'>\ell':$} &  \sum\limits_{p=0}^{\ell'}\sum\limits_{j=0}^{\omega-1}  (-1)^j \big( 
\binom{\omega-1}{j}
\binom{n'-p-1-j(\ell'+1)}{\omega-1}-  \\ & \quad \quad \quad \quad \quad \quad \binom{n'-p-1-(j+1)(\ell'+1)}{\omega-1}\big),  \\
n'\leq \ell': &  \binom{n'}{\omega},
\end{array} \right. .
\end{equation*} 
For $\omega=0$ and $\omega=1$, it holds that $A(n',\ell',0) = 1$, if $n'\leq \ell'$ and $A(n',\ell',0) = 0$ otherwise. Further, $A(n',\ell',1) = (q-1)\max \{0, 2(\ell '+1)-n'\}$. \\

Plugging in the result from Lemma \ref{lemma:tandem:deletion:number:words:sphere:size} into Corollary \ref{cor:cardinality:upper:bound} directly gives an upper bound on the cardinality of a code correcting a single tandem duplication error of size $\ell$. Note that the number of irreducible words of length $n$ can be obtained by $|\mathrm{IRR}^{\tau_\ell}_t \cap \Zq^{n}| = N^{\tau_\ell^D}_1(n,0)$.
\section{Code Constructions}
In this section, we propose code constructions that can correct a single tandem duplication, respectively a single palindromic duplication.
\subsection{Code Correcting a Single Tandem Duplication} \label{ss:code:single:tandem:duplication}
For the following construction, which is able to correct a single tandem duplication of length $\ell$, we use the general construction presented in \cite{Jain16} with an explicit code which can correct a single error in the $\ell_1$ metric. Varshamov-Tenegolts (VT) codes \cite{Varshamov65} are single asymmetric error correcting codes that can also be applied to single $\ell_1$-metric errors. According to the original definition, we construct a set of vectors over the natural numbers, which satisfies the VT constraint.
\begin{definition} For some $0 \leq a \leq n$, the set of vectors satisfying the VT constraint is defined as
	\[ \mathcal{VT}_a(n) = \left\{ \ve{x} \in \mathbb{N}_0^n : \left\langle\sum_{i=1}^n x_i\right\rangle_{n+1} = a  \right\}.\]
\end{definition}
\begin{construction} \label{con:tandem:vt} For some $\ve{a} \in \mathbb{N}^{n-\ell+1}, 1 \leq a_i \leq i$, 
	\[\mathcal{C}_1(n) = \{ \ve{x} \in \Zq^n : \sigma_\ell(\ve{v}) \in \mathcal{VT}_{a_{\omega+1}}(\omega+1) \}, \]
	where $\phi_\ell(\ve{x})=(\ve{u}, \ve{v})$ is the $\ell$-step derivative of $\ve{x}$ and $\omega = wt_\mathrm{H}(\ve{v})$ is the length of the $\ell$-zero signature.
\end{construction}
It can be directly deduced from the results in \cite{Jain16} that Construction \ref{con:tandem:vt} is single tandem duplication correcting. The minimum size of this construction can directly be obtained and is given in the following Lemma.
\begin{lemma}
		There exist integers $\ve{a} \in \mathbb{N}^{n-\ell+1}, 1 \leq a_i \leq i$, such that the cardinality of Construction \ref{con:tandem:vt} satisfies
		\[ |\mathcal{C}_1(n)| \geq q^\ell \sum_{\nu=0}^{\left\lfloor\frac{n}{\ell}\right\rfloor-1}  \sum_{\omega=0}^{n-(\nu+1) \ell} A(n - (\nu+1) \ell,\ell-1,\omega) \frac{\binom{\omega+\nu}{\nu}}{\omega+2}. \]
\end{lemma}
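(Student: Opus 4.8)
The plan is to reduce the cardinality count to a weighted sum over the two integer invariants that the VT constraint actually depends on, and then to choose each $a_{\omega+1}$ by a pigeonhole argument so that a $1/(\omega+2)$ fraction of each class survives. First I would use that the $\ell$-step derivative $\phi_\ell(\ve{x})=(\ve{u},\ve{v})$ is a bijection and that $\ve{v}$ is determined by the pair $(\mu_\ell(\ve{v}),\sigma_\ell(\ve{v}))$. Writing $\omega=wt_{\mathrm{H}}(\ve{v})$ and $\nu=|\sigma_\ell(\ve{v})|_1$, the membership test $\sigma_\ell(\ve{v})\in\mathcal{VT}_{a_{\omega+1}}(\omega+1)$ reads $\langle\nu\rangle_{\omega+2}=a_{\omega+1}$, so it sees $\ve{x}$ only through the pair $(\omega,\nu)$. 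Reusing the decomposition behind Lemma~\ref{lemma:tandem:deletion:number:words:sphere:size}, the number of words with prescribed $(\omega,\nu)$ is the product $q^\ell A(n-(\nu+1)\ell,\ell-1,\omega)\binom{\omega+\nu}{\nu}$: the prefix $\ve{u}\in\mathbb{Z}_q^\ell$ is free, the trunk $\mu_\ell(\ve{v})$ is any $q$-ary word of length $n-(\nu+1)\ell$ with zero-runs below $\ell$ and weight $\omega$, and the signature is any length-$(\omega+1)$ vector over $\mathbb{N}_0$ summing to $\nu$, which stars-and-bars counts as $\binom{\omega+\nu}{\nu}$ (this replaces the support-refined count $\sum_i\binom{\omega+1}{i}\binom{\nu-1}{i-1}$ of the earlier lemma). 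These classes are disjoint and exhaust $\mathbb{Z}_q^n$ as $(\omega,\nu)$ ranges over $0\le\nu\le\lfloor n/\ell\rfloor-1$ and $0\le\omega\le n-(\nu+1)\ell$.

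Next I would fix $\omega$ and collect the words of that value into the $\omega+2$ residue classes of $\nu$ modulo $\omega+2$. Their total size is $S_\omega=q^\ell\sum_\nu A(n-(\nu+1)\ell,\ell-1,\omega)\binom{\omega+\nu}{\nu}$, so by pigeonhole some residue class contains at least $S_\omega/(\omega+2)$ words; setting $a_{\omega+1}$ equal to that residue retains at least this many codewords of weight parameter $\omega$. Performing this selection independently for each $\omega$ and summing yields
\[ |\mathcal{C}_1(n)|\ \ge\ \sum_{\omega}\frac{S_\omega}{\omega+2}\ =\ q^\ell\sum_{\omega}\sum_{\nu}A(n-(\nu+1)\ell,\ell-1,\omega)\frac{\binom{\omega+\nu}{\nu}}{\omega+2}, \]
and interchanging the two summations (outer over $\nu$, inner over $\omega$) reproduces the claimed bound verbatim.

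The step that will need the most care is the pigeonhole selection itself. There are $\omega+2$ residues $\{0,\dots,\omega+1\}$ accounting for the denominator $\omega+2$, yet the construction only admits $a_{\omega+1}\in\{1,\dots,\omega+1\}$; hence I would have to argue that an \emph{admissible} residue still attains the average $S_\omega/(\omega+2)$ (equivalently, that the residue-$0$ class is never the strict unique maximum for the given $\omega$), rather than merely that the global maximum does. Pinning down this compatibility between the averaging bound and the prescribed range $1\le a_i\le i$ is the crux of the existence claim. By contrast, the bijective decomposition into $(\ve{u},\mu_\ell(\ve{v}),\sigma_\ell(\ve{v}))$, the stars-and-bars enumeration of the signature, and the final reindexing of the double sum are routine bookkeeping that mirrors the proof of Lemma~\ref{lemma:tandem:deletion:number:words:sphere:size}.
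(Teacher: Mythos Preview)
Your approach is the paper's: the same $(\ve{u},\mu_\ell(\ve{v}),\sigma_\ell(\ve{v}))$ decomposition, the same factors $q^\ell$, $A(n-(\nu+1)\ell,\ell-1,\omega)$ and $\binom{\omega+\nu}{\nu}$, and the same pigeonhole division by $\omega+2$; your per-$\omega$ aggregation before pigeonholing is in fact cleaner than the paper's phrasing, which states the $1/(\omega+2)$ bound as if it held termwise for each fixed $(\omega,\nu)$. The compatibility issue you flag as the crux---that the lemma restricts to $a_{\omega+1}\in\{1,\dots,\omega+1\}$ while the pigeonhole winner could be residue $0$---is not addressed in the paper's proof either (it simply asserts ``we can always find an integer $a_{\omega+1}\le\omega+1$'' and proceeds), so you have matched and slightly sharpened the original argument rather than missed anything relative to it.
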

\begin{proof}
	Let $\nu$ be the number of length $\ell$ tandem duplications in $\ve{x}$, i.e. $|\sigma_\ell(\ve{v})|_1 = \nu$. Further, let $\omega$ be the Hamming weight of the $\ell$-trunk (for its definition, see Section \ref{ss:tandem:duplication:preliminaries}), i.e. $wt_{\mathrm{H}}(\mu_\ell(\ve{v})) = \omega$ and thus the length of the $\ell$-zero signature is given by $|\sigma_\ell(\ve{v})| = \omega+1$. The total number of such possible $\ell$-zero signatures is $\binom{\omega+\nu}{\nu}$. Due to the pigeonhole principle, we can always find an integer $a_{\omega+1} \leq \omega+1$, such that the number of $\ell$-zero signatures satisfying the VT-constraint is at least $\frac{\binom{\omega+\nu}{\nu}}{\omega+2}$. Counting the $q$-ary vectors of length $n-(\nu+1)\ell$ that have zero-runs of length at most $\ell-1$ and Hamming weight $\omega$ with $A(n - (\nu+1) \ell,\ell-1,\omega)$ yields the lemma. \qed
\end{proof}
\subsection{Construction Correcting a Palindromic Duplication for $\ell=2$} \label{ss:code:single:palindromic:duplication}
For the case of a binary alphabet $q=2$, we propose a construction that is able to correct a single palindromic duplication of length $\ell=2$. We start with some definitions that will help for defining the code construction.
\begin{definition}[$1$-run-length profile]
	The set of all binary words of length $n$, whose number of runs of length $1$ is congruent to $a \bmod 5$, is defined by
	\begin{equation*}
	\mathcal{D}_a(n) = \left\{ \ve{x} \in \mathbb{Z}_2^n | \left\langle r^{(1)}(\ve{x}) \right\rangle_5 = a \right\},
	\end{equation*}
	where $r^{(1)}(\ve{x})$ is the number of runs with length $1$.
\end{definition}
\begin{definition}[Run-length profile constraint]
	We define the set of binary words with run-length profile constraint $b$
	\begin{equation*}
	\mathcal{E}_b(n) = \left\{ \ve{x} \in \mathbb{Z}_2^n | \left\langle C(\ve{x})\right\rangle_{2n+1} = b \right\},
	\end{equation*}
	with the checksum for the run lengths
	\begin{equation*}
	C(\ve{x}) = \sum_{i=1}^{r(\ve{x})} i r_i(\ve{x}).
	\end{equation*}
\end{definition}
In general, it is possible to formulate the above definitions for words over arbitrary finite alphabets, however, in this section we are only interested in finding a construction for binary words. With these definitions, it is possible to state the following construction, which we will show that is a single palindromic duplication (length $2$) correction correcting code.
\begin{construction} \label{con:palindrome:2} For $a \in \{0,1,2,3,4\}$ and $b \in \{0,1,\dots, 2n\}$, we construct the following binary code of length $n$
	\[ \mathcal{C}_2(n) = \mathcal{D}_a(n) \cap \mathcal{E}_b(n). \]
\end{construction}
\begin{theorem}
	The code $\mathcal{C}_2(n)$ is single palindromic duplication (length $2$) correcting for any $a \in \{0,1,2,3,4\}$ and $b \in \{0,1,\dots, 2n\}$.
\end{theorem}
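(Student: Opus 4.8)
The plan is to build a decoder and invoke Definition~\ref{eq:t:error:correcting}. Since a length-$2$ palindromic duplication increases the length by exactly $2$, the ball $B_1^{\rho_2}(\ve{c})$ is the disjoint union of $\{\ve{c}\}$ (length $n$) and the sphere $S_1^{\rho_2}(\ve{c})$ (length $n+2$); hence two distinct codewords can have intersecting balls only through their length-$(n+2)$ spheres, and it suffices to show that every $\ve{y}$ of length $n+2$ is the image of at most one codeword under a single $\rho_2$. I would prove this by tracking how $\rho_2$ acts on the run-length sequence and then using the two check symbols of the code to recover the location of the duplication.

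First I would classify the action of $\rho_2$ at position $p$ on the runs of $\ve{x}$ according to the duplicated pair $x_{p+1}x_{p+2}$ and the inserted pair $(x_{p+2},x_{p+1})$: (A) if $x_{p+1}=x_{p+2}$ the pair lies inside one run, which is lengthened by $2$; (B1) if $x_{p+1}\neq x_{p+2}$ and the following run has length at least $2$, the boundary is split and two new runs of lengths $2$ and $1$ appear; (B2) if the following run has length $1$ and is not the last run, the inserted singleton merges with the next run and the number of runs is unchanged; and one end-of-word edge case in which the pair sits at the final boundary and the last run has length $1$. For each case I would record the increment $\Delta r^{(1)}$ of the number of length-$1$ runs and the increment $\Delta C$ of the checksum $C$ as functions of the affected run index $j$ and, when the run count changes, of the prefix length $\sum_{k\le j} r_k(\ve{x})$. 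One finds $\Delta r^{(1)}\in\{-2,-1,0,1,2\}$, with values $+1,+2$ occurring only in B1, values $-1,-2$ only in B2, and $0$ shared by (A) and the end edge case.

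These five values are pairwise distinct modulo $5$, so since both candidate codewords share the residue $a=\langle r^{(1)}\rangle_5$, the constraint $\mathcal{D}_a(n)$ forces the same case type, up to the single ambiguity between (A) and the end edge case at residue $0$. I would then locate the duplication with $\mathcal{E}_b(n)$. In cases (A) and (B2) the run count is unchanged and $\Delta C$ is affine in $j$ (namely $2j$ and $2j+3$), so $\langle \Delta C\rangle_{2n+1}$ determines $j$ because $1\le j\le r(\ve{y})\le n$. The residue-$0$ ambiguity is settled by parity: case (A) gives an even $\Delta C$ while the end edge case gives an odd $\Delta C=2r(\ve{y})-1$, both inside $[1,2n]$ (using $r(\ve{y})\le n$ whenever case (A) is admissible) and thus distinct modulo $2n+1$.

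The main obstacle is case (B1): here the run count rises by $2$, every index past the duplication shifts, and $\Delta C$ depends on both $j$ and the prefix length $L_j=\sum_{k\le j} r_k(\ve{x})$, giving $\Delta C\equiv 2(j-L_j)\pmod{2n+1}$. Writing $i$ for the index of the length-$2$ run in $\ve{y}$, this becomes $\Delta C\equiv 2\big((i-1)-\sum_{k\le i-1} r_k(\ve{y})\big)\pmod{2n+1}$, and the integer $(i-1)-\sum_{k\le i-1} r_k(\ve{y})$ lies in an interval of length $n+1<2n+1$, so equal residues force equality as integers. A short counting argument then shows that two distinct admissible positions $i_1<i_2$ with the same value would force every run with index in $[i_1,i_2-1]$ to have length $1$, contradicting $r_{i_1}(\ve{y})=2$. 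Hence in every case the type and the location, and therefore the inverse operation, are uniquely determined, so $\ve{c}_1=\ve{c}_2$ and the code is single palindromic duplication correcting.
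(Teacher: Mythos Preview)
Your proof is correct and follows the same strategy as the paper: classify the effect of $\rho_2$ on the run profile by the increment $\Delta r^{(1)}\in\{-2,-1,0,1,2\}$, use the $\mathcal{D}_a$ constraint to identify the case, and then use the checksum residue from $\mathcal{E}_b$ to pin down the affected run. Your handling of the run-count-increasing case B1 via the injectivity argument (equal values of $(i-1)-\sum_{k\le i-1}r_k(\ve{y})$ force all intermediate runs to have length $1$, contradicting $r_{i_1}(\ve{y})=2$) is somewhat crisper than the paper's more informal resolution of the same ambiguity in its cases~4 and~5; just note that for the ``interval of length $n+1$'' step you actually need the doubled values $2f(i)$ to span fewer than $2n+1$ residues, which follows from the sharper bound $r(\ve{y})\ge 4$ in B1.
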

\begin{proof}
	Let us consider all possible constellations for palindromic duplications of length $2$. There are five basic patterns that have to be taken into account and they are displayed in Table \ref{tab:Duplication_Constellations} with their corresponding erroneous outcomes.
	\begin{table}[h]
		\centering
		\begin{tabular}{l|cl|cl}
			Case & Original Sequence & RL Profile & Perturbed Sequence & RL Profile \\ \hline
			1.a & $\underline{aa}$ & $(r_j)$ & $\underline{aaaa}$ & $(r_j+2)$ \\
			1.b & $\underline{ab}$ & $(r_j,1)$ & $\underline{abba}$ & $(r_j,2,1)$ \\ \hline
			2.a & $\underline{ab}aa$ & $(r_j,1,r_{j+2})$ & $\underline{abba}aa$ & $(r_j,2,r_{j+2}+1)$ \\
			2.b & $\underline{ab}a$ & $(r_j,1,1)$ & $\underline{abba}a$ & $(r_j,2,2)$ \\ \hline
			3 & $\underline{ab}ab$ & $(r_j,1,1,r_{j+3})$ & $\underline{abba}ab$ & $(r_j,2,2,r_{j+3})$ \\ \hline
			4.a & $\underline{ab}ba$ & $(r_j,2,r_{j+2})$ & $\underline{abba}ba$ & $(r_j,2,1,1,r_{j+2})$ \\
			4.b & $\underline{ab}b$ & $(r_j,2)$ & $\underline{abba}b$ & $(r_j,2,1,1)$ \\ \hline
			5 & $\underline{ab}bb$ & $(r_j,r_{j+1})$ & $\underline{abba}bb$ & $(r_j,2,1,r_{j+1}-1)$ \\
		\end{tabular}
		\caption{Duplication constellations for $\ell=2$}
		\label{tab:Duplication_Constellations}
	\end{table}
	In Table \ref{tab:Duplication_Constellations}, $a \in \mathbb{Z}_2$ and $b \in \mathbb{Z}_2$ denote two distinct symbols with $a\neq b$, $j \in \mathbb{N}$ denotes the run in which the duplication occurred and $r_j, r_{j+1}, r_{j+2}, r_{j+3}$ the lengths of the $j$-th, $(j+1)$st, $(j+2)$nd, respectively $(j+3)$rd run. Note that the cases 1.b, 2.b, and 4.b refer to the case when the palindromic duplication occurred one symbol before the ending of word. It can be observed that for each case, the number of length-$1$ runs is changed by a distinct value. In case $1$, the number of length $1$ runs is not changed, in case $2$, it decreases by $1$, in case $3$, it decreases by $2$, in case $4$ it increases by $2$ and in the last case it increases by $1$. This enables to distinguish between these cases, if we choose codewords from the set of words that satisfy the run-profile constraint $\mathcal{D}_a(n)$. Assume, the word $\ve{y}$ is received. The decoder then computes $\left\langle a - r^{(1)}(\ve{y}) \right\rangle_5$, which allows to identify one of the above five cases. \\
	We will now show that, given an erroneous channel output $\ve{y} \in \mathbb{Z}_2^{n+2}$, for each of the above cases, we can unambiguously determine the run $j \in \{1, 2, \dots, r(\ve{x}) \}$ in which the palindromic duplication occurred by computing the checksum difference $\tilde{C} = \left\langle C(\ve{y}) - b \right\rangle_{2n+1}$. 
	\begin{table}[h]
		\centering
		\begin{tabular}{l|c|c|c}
			Case & $\tilde{C}$ & Range of $j$ & Range of $r(\ve{y})$   \\ \hline
			1.a & $2j$ & $1\leq j \leq r(\ve{y})$ & $1\leq r(\ve{y}) \leq n-1$  \\
			1.b & $2r(\ve{y})-1$ & -- & $3\leq r(\ve{y}) \leq n+1$  \\ \hline
			2.a & $2j+3$ & $1 \leq j \leq r(\ve{y})-2$ & $3\leq r(\ve{y}) \leq n-1$  \\
			2.b & $2r(\ve{y})-1$ & -- & $3\leq r(\ve{y}) \leq n$ \\ \hline
			3 & $2j+3$ & $1 \leq j \leq r(\ve{y})-3 $ & $4\leq r(\ve{y}) \leq n$ \\ \hline
			4.a & $2j+5 + 2\sum_{k=j+4}^{r(\ve{y})} r_k(\ve{y})$ & $1 \leq j \leq r(\ve{y}) - 4$ & $5\leq r(\ve{y}) \leq n+1 $ \\
			4.b & $ 2r(\ve{y}) - 1$ & -- & $4\leq r(\ve{y}) \leq n + 1$ \\ \hline
			5 & $2j+3 + 2\sum_{k=j+3}^{r(\ve{y})} r_k(\ve{y})$ & $1 \leq j \leq r(\ve{y})-3 $ & $4\leq r(\ve{y}) \leq n$
		\end{tabular}
		\caption{Checksum differences for all cases}
		\label{tab:Checksum_Increases}
	\end{table}
	Table \ref{tab:Checksum_Increases} shows the increases of the checksum for the five different cases. To begin with, we show that the cases 1.a and 1.b, respectively 2.a and 2.b or 4.a and 4.b can be distinguished using $\tilde{C}$. Case 1.a and 1.b can be distinguished since 1.a yields even and 1.b yields odd integers or $0$ for $\tilde{C}$. Cases 2.a and 2.b have the same checksum only for $j=r(\ve{y})-2$, which in both cases means, that the duplication was in the run $r(\ve{y})-2$. Cases 4.a and 4.b only give the same checksum $ 2r(\ve{y}) - 1$, if there is an alternating sequence after the palindromic duplication in case 4.a, which corresponds to case 4.b. \\ Having available the exact case, the run $j$, in which the palindromic duplication occurred, is then obtained by finding the index $j$, which gives the observed checksum deficiency. In cases $4$ and $5$ there might be several adjacent $j$, which satisfy the equation. As this might only occur for sequences, which are alternating before the palindromic duplication, we can identify the position of the palindromic duplication by identifying the run $j$, which satisfies the checksum difference $\tilde{C}$ and ends in a palindromic duplication.
	\qed
\end{proof}
We illustrate the decoding process with the following example.
\begin{example}[Correcting a palindromic duplication]
	Consider the sequence $\ve{x} = (01011001) \in \mathcal{C}_{4, 13}(8)$, which is transmitted over a channel and results in the received word $\ve{y} = (01\underline{0110}1001)$, where the underlined part is the part, which is palindromic duplicated. We have $r(\ve{x}) = (1,1,1,2,2,1)$ and $C(\ve{x}) = 1+2+3+2\cdot 4 + 2\cdot 5 + 6 = 30 = 13 \mod 17$. Thus, $\ve{x} \in \mathcal{D}_4(8) \cap \mathcal{E}_{13}(8)$. For the received word $\ve{y}$, we have $r(\ve{y}) = (1,1,1,2,1,1,2,1)$, $r^{(1)}(\ve{y}) = 6 = 1 \mod 5$ and $C(\ve{y}) = 1+2+3+2\cdot 4 +  5 + 6 + 2\cdot 7 + 8 = 47 = 13 \mod 17$. This means the number of run of length $1$ increased by $\left\langle1-4\right\rangle_5 = 2$, which means we are in case 4. Let us now find those values $j$, for which $2j+5 + 2\sum_{k=j+4}^{r(\ve{y})} r_k(\ve{y}) \overset{!}{=} \tilde{C} = 0 \mod 17$. A quick calculation yields two candidates $j=2$ and $j=3$, from which only $j=3$ is possible, since this is the only run, which ends in a palindrome. Deleting the palindrome which starts at the third run gives the correct transmit word $\ve{x}$.
\end{example}
\begin{corollary}
	There exist $a \in \{0,1,2,3,4\}$ and $b \in \{0,1,\dots,2n\}$ such that
	\[|\mathcal{C}_2(n)| \geq \frac{2^n}{5(2n+1)}.\]
	
\end{corollary}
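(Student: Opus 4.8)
The plan is to use a straightforward averaging (pigeonhole) argument over the two modular constraints defining $\mathcal{C}_2(n)$. The key observation is that the two quantities appearing in the construction, namely $\langle r^{(1)}(\ve{x})\rangle_5$ and $\langle C(\ve{x})\rangle_{2n+1}$, are well-defined for every binary word $\ve{x} \in \mathbb{Z}_2^n$ and take values in the finite sets $\{0,1,2,3,4\}$ and $\{0,1,\dots,2n\}$, respectively.

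First I would note that the sets $\mathcal{D}_a(n)$ are pairwise disjoint and their union over $a \in \{0,1,2,3,4\}$ is all of $\mathbb{Z}_2^n$, since every word has exactly one value of $\langle r^{(1)}(\ve{x})\rangle_5$; the same holds for the sets $\mathcal{E}_b(n)$ as $b$ ranges over $\{0,\dots,2n\}$. Consequently the assignment $\ve{x} \mapsto \big(\langle r^{(1)}(\ve{x})\rangle_5,\, \langle C(\ve{x})\rangle_{2n+1}\big)$ partitions $\mathbb{Z}_2^n$ into $5(2n+1)$ disjoint cells, where the cell labeled $(a,b)$ is precisely $\mathcal{D}_a(n) \cap \mathcal{E}_b(n)$, i.e.\ the code $\mathcal{C}_2(n)$ for that choice of parameters.

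The final step is the pigeonhole principle: since these $5(2n+1)$ cells are disjoint and together contain all $|\mathbb{Z}_2^n| = 2^n$ words, at least one of them must contain at least $\frac{2^n}{5(2n+1)}$ words. Choosing $a$ and $b$ to be the labels of that cell yields the claimed lower bound on $|\mathcal{C}_2(n)|$.

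There is essentially no serious obstacle here; the only point to verify carefully is that the two constraints genuinely induce a partition into at most $5(2n+1)$ parts, that is, that each word lands in exactly one $\mathcal{D}_a(n)$ and exactly one $\mathcal{E}_b(n)$, so the number of distinct parameter pairs does not exceed $5(2n+1)$. This is immediate from the definitions, so the bound follows directly from averaging.
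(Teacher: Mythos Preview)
Your proposal is correct and follows essentially the same approach as the paper: a direct pigeonhole argument over the $5(2n+1)$ residue classes for the pair $(a,b)$. The paper's proof is in fact even more terse, simply invoking the pigeonhole principle without spelling out the partition.
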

\begin{proof}
	With the pigeonhole principle, we can find $a \in \{0,1,2,3,4\}$ and $b \in \{0,1,\dots,2n\}$ that yield a code size of at least $\frac{2^n}{5(2n+1)}$. \qed
\end{proof}
The redundancy of Construction \ref{con:palindrome:2} is at most $\log_2(n) + \log_2(10)$ while the best known burst insertion (length $2$) correcting code \cite{Schoeny17} has redundancy at most $\log_2(n) + \log_2(\log_2(n)) +1$. The redundancy scaling of the proposed construction is therefore lower by the term $\log_2(\log_2(n))$.
\subsection{Construction with Palindrome-Free Strings}
In contrast to the previous sections, where we discussed duplications of a fixed length $\ell$, we are now considering single palindromic duplications of arbitrary lengths $2 \leq \ell \leq n$. Note that the case $\ell=1$ is excluded here, since in this case, a palindromic duplication is a single duplication error, which has been studied in, e.g. \cite{Dolecek10,Levenshtein65}. Due to the fact that received words can have length $n$ up to $2n$, a sphere packing approach will not yield good lower bounds for this error model. In the following, we propose a code, which corrects a single palindromic duplication of any length $2$ up to $n$ by using palindrome free words.
\begin{definition}
	A word $\ve{x} \in \Zq^n$ is called \emph{$\ell$-palindrome free}, if $S^{\rho^D_\ell}_1(\ve{x}) = \emptyset$.
\end{definition}
\begin{example}
	The word $\ve{x} = (012122)$ is $2$-palindrome free, while the word $\ve{y} = (012212)$ is \emph{not} $2$-palindrome free, as it contains $(1221)$, which is a palindrome of length $2$.
\end{example}
Due to the combinatorial structure of palindromes, it is intuitive, that any word that is $\ell$-palindrome free, also does not contain any palindrome of length at least $\ell$. This is shown in them following lemma.
\begin{lemma}
	For $\ve{x} \in \Zq^n$ and $\ell_2 \geq \ell_1$, we have
	\[S^{\rho^D_{\ell_1}}_1(\ve{x}) = \emptyset \Longrightarrow S^{\rho^D_{\ell_2}}_1(\ve{x}) = \emptyset. \]
\end{lemma}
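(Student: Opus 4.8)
The plan is to prove the contrapositive: assuming $S^{\rho^D_{\ell_2}}_1(\ve{x}) \neq \emptyset$, I would show that $S^{\rho^D_{\ell_1}}_1(\ve{x}) \neq \emptyset$ whenever $\ell_2 \geq \ell_1$. The starting observation is that, by the definition of the palindromic deletion operation, $S^{\rho^D_\ell}_1(\ve{x}) \neq \emptyset$ holds precisely when $\ve{x}$ contains, at some admissible position $p$, a length-$2\ell$ substring of the form $\ve{v}\ve{v}^\mathrm{R}$ with $|\ve{v}| = \ell$; equivalently, when $\ve{x}$ contains an even-length palindromic substring of length $2\ell$. I would make this reformulation explicit at the outset, so that the entire argument reduces to a statement about palindromic substrings and their nesting.

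Next I would extract, from the assumption $S^{\rho^D_{\ell_2}}_1(\ve{x}) \neq \emptyset$, an explicit palindrome: there is a position $p$ with $0 \leq p \leq n - 2\ell_2$ such that $x_{p+i} = x_{p+2\ell_2+1-i}$ for all $1 \leq i \leq 2\ell_2$. The key step is then to show that the central $2\ell_1$ symbols of this palindrome again form a palindrome. Concretely, I would set $p' = p + (\ell_2 - \ell_1)$ and consider the substring $x_{p'+1}, \dots, x_{p'+2\ell_1}$, verifying the palindrome identity $x_{p'+j} = x_{p'+2\ell_1+1-j}$ for $1 \leq j \leq 2\ell_1$. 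This follows by substituting $i = \ell_2 - \ell_1 + j$ into the length-$2\ell_2$ condition, since then $x_{p'+j} = x_{p+i} = x_{p+2\ell_2+1-i} = x_{p+\ell_2+\ell_1+1-j} = x_{p'+2\ell_1+1-j}$. Because $\ell_2 \geq \ell_1$ we have $p' \geq p \geq 0$ and $p'+2\ell_1 = p+\ell_2+\ell_1 \leq p+2\ell_2 \leq n$, so this central block is a genuine substring sitting at an admissible position $0 \leq p' \leq n - 2\ell_1$.

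Writing the central block as $\ve{v}'\ve{v}'^\mathrm{R}$ with $|\ve{v}'| = \ell_1$, this exhibits a valid palindromic deletion of length $\ell_1$ at position $p'$, so $S^{\rho^D_{\ell_1}}_1(\ve{x}) \neq \emptyset$, which is exactly the contrapositive of the claim. The main obstacle is simply the bookkeeping of the index shifts: one must check that shrinking a palindrome about its fixed center again yields a palindrome (the displayed index substitution above) and that the shifted position $p'$ stays within range. Both are elementary once the palindrome condition is written in coordinates, so no machinery beyond the substring reformulation is needed; the content of the lemma is entirely the nesting property that an even palindrome of length $2\ell_2$ always contains an even palindrome of length $2\ell_1$ at its center.
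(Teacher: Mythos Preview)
Your proof is correct and follows essentially the same approach as the paper: both argue the contrapositive by restricting a length-$2\ell_2$ palindrome to its central $2\ell_1$ symbols. The paper's version is terser and leaves the index shift $p' = p + (\ell_2 - \ell_1)$ implicit, whereas you spell out the substitution and the range check explicitly.
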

\begin{proof}
	By the definition of a $\ell_1$-palindrome-free word, we see that $S^{\rho^D_{\ell_1}}_1(\ve{x}) = \emptyset \Longleftrightarrow \nexists \, 1 \leq i \leq n-2\ell_1 : x_{i-j+\ell_1} = x_{i+j+1+\ell_1} \, \forall \, j \in \{0,1, \dots \ell_1-1\}$. Since $\ell_2 \geq \ell_1$, it follows that $\nexists \, 1 \leq i \leq n-2\ell_2 : x_{i-j+\ell_2} = x_{i+j+1+\ell_2} \, \forall \, j \in \{0,1, \dots \ell_2-1\}$. \qed
\end{proof}
Now for our construction, we consider words, that do not contain any palindrome of length $2$, i.e. $S^{\rho^D_2}(\ve{x}) = \emptyset$. For these words, the following lemma holds.
\begin{lemma} \label{lemma:2:palindrome:free:words}
	Let $\ve{x}, \ve{y} \in \Zq^n$ be two $2$-palindrome free words, i.e. $S^{\rho^D_2}_1(\ve{x}) = S^{\rho^D_2}_1(\ve{y}) = \emptyset$. Then
	\[S^{\rho_\ell}_1(\ve{x}) \cap S^{\rho_\ell}_1(\ve{y}) = \emptyset,\]
	for all $\ell \geq 2$.
\end{lemma}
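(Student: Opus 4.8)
The plan is to prove the equivalent, cleaner statement that \emph{any} word lying in both spheres forces $\ve{x}=\ve{y}$; disjointness for distinct $2$-palindrome-free words is then immediate. So suppose $\ve{z}\in S^{\rho_\ell}_1(\ve{x})\cap S^{\rho_\ell}_1(\ve{y})$ and write $\ve{z}=\rho_\ell(\ve{x},p)=\rho_\ell(\ve{y},q)$. By definition $\ve{z}$ then contains two even palindromic factors of length $2\ell$, namely $P_1=z_{p+1}\cdots z_{p+2\ell}$ and $P_2=z_{q+1}\cdots z_{q+2\ell}$, and $\ve{x}$ (resp.\ $\ve{y}$) is recovered from $\ve{z}$ by deleting the right half $[p+\ell+1,p+2\ell]$ of $P_1$ (resp.\ the right half $[q+\ell+1,q+2\ell]$ of $P_2$). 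The crucial input is the preceding lemma: since $\ve{x},\ve{y}$ are $2$-palindrome free they contain no factor $\ve{v}\ve{v}^{\mathrm{R}}$ with $|\ve{v}|\ge 2$, i.e.\ no even palindrome of length at least $4$. I would assume without loss of generality $p\le q$ and set $d:=q-p$; the case $d=0$ gives $\ve{x}=\ve{y}$ at once, so take $d\ge 1$.

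First I would dispose of the well-separated case $d\ge\ell$. Here the block $[q+\ell+1,q+2\ell]$ deleted to produce $\ve{y}$ starts at position $q+\ell+1=p+d+\ell+1>p+2\ell$, hence lies strictly to the right of $P_1$. Therefore $P_1$ is untouched and survives as a consecutive even palindrome of length $2\ell\ge 4$ inside $\ve{y}$, contradicting the $2$-palindrome freeness of $\ve{y}$ via the preceding lemma. So this case cannot occur.

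The remaining, and genuinely delicate, case is the overlapping one $1\le d\le\ell-1$, in which the two deleted blocks (and $P_1,P_2$ themselves) overlap. Composing the two palindromic reflections of $\ve{z}$ about the centres of $P_1$ and $P_2$, which sit at distance $d$, shows that $\ve{z}$ is $2d$-periodic across the span of $P_1$ and $P_2$. A short direct computation shows that the two block-deletions agree, i.e.\ $\ve{x}=\ve{y}$, precisely when $z_i=z_{i+\ell}$ for all $i\in[p+\ell+1,p+\ell+d]$. The plan is then to prove that whenever this fails, the $2d$-periodic palindromic structure of $\ve{z}$ exposes an even palindrome of length a multiple of $2d$ (hence $\ge 4$ once $d\ge 2$) that survives one of the two deletions, again contradicting $2$-palindrome freeness; for $d=1$ the period-$2d$ constraint together with $P_1$ being an even palindrome collapses to a genuine run, which instead forces $\ve{x}=\ve{y}$ directly.

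I expect this last step to be the main obstacle. The difficulty is that the palindromes $P_1,P_2$ need not be aligned with the period $2d$, so one must argue, in the spirit of combinatorics on words and Fine--Wilf type reasoning for periodic palindromes, that a period-$2d$ even palindrome always contains a palindromic rotation of length $2d$ which, after one of the deletions, appears as a forbidden factor in $\ve{x}$ or $\ve{y}$. Small instances illustrate the two behaviours to be unified: for $\ell=2,d=1$ the constraints force a local run and yield $\ve{x}=\ve{y}$, whereas for $\ell=3,d=2$ they instead expose a surviving length-$4$ palindrome in $\ve{y}$. Once the three regimes $d=0$, $1\le d\le\ell-1$, and $d\ge\ell$ are all settled, every common word forces $\ve{x}=\ve{y}$, which proves the lemma for all $\ell\ge 2$.
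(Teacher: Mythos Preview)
Your case split and the arguments for $d=0$ and $d\ge\ell$ are fine and coincide with the paper's treatment of $j\ge\ell$: the full palindrome $P_1$ sits entirely to the left of the block deleted to recover $\ve{y}$, so it survives inside $\ve{y}$ and violates $2$-palindrome freeness.

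The difference lies in the overlapping range $1\le d\le\ell-1$, where you reach for periodicity and Fine--Wilf and flag this as ``the main obstacle''. It is not an obstacle at all; the same surviving-palindrome argument works uniformly. The block deleted to obtain $\ve{y}$ is $[q+\ell+1,q+2\ell]=[p+d+\ell+1,p+d+2\ell]$, so all positions $\le p+d+\ell$ of $\ve{z}$ are untouched in $\ve{y}$. In particular the \emph{central} $2d$ symbols of $P_1$, at positions $[p+\ell-d+1,\,p+\ell+d]$, are kept verbatim, and being the middle of an even palindrome they themselves form an even palindrome of length $2d$. For $d\ge 2$ this is an even palindrome of length $\ge 4$ inside $\ve{y}$, contradicting $2$-palindrome freeness. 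This is exactly what the paper does: it substitutes equation~(a) into equation~(b) of the appendix system to obtain $y_{i+\ell-m}=y_{i+\ell+1+m}$ for $m\in\{0,\dots,j-1\}$, which is precisely this surviving central block. No $2d$-periodicity of $\ve{z}$, no Fine--Wilf, no analysis of ``palindromic rotations'' is needed.

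The only genuinely exceptional value is $d=1$, where the surviving block has length $2$ and gives no contradiction. You correctly isolate this and sketch the right mechanism (the two overlapping reflections force a run on the span, which yields $\ve{x}=\ve{y}$; for $\ell\ge 3$ that run already has length $\ge 4$ and is itself a forbidden $2$-palindrome). The paper's proof is terse at this point, but your handling of $d=1$ is the appropriate complement. In short: keep your framing, drop the periodicity machinery for $2\le d\le\ell-1$, and simply reuse the surviving-(sub)palindrome argument there.
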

\begin{proof}
	Consider the system of equations $\rho_{\ell}(\ve{x},i) = \rho_{\ell}(\ve{y},i+j)$ in Appendix \ref{sec:conditions:equivalence:palindromic:duplications:two:words}. Plugging in, \eqref{eq:xy:cond:j:less:l:a} into \eqref{eq:xy:cond:j:less:l:b}, we yield $y_{i+\ell-m} = y_{i+\ell+1+m}$ for $m \in \{0, \dots, j-1\}$, which corresponds to a palindrome of length $j$. Since neither $\ve{x}$ nor $\ve{y}$ can have a palindrome of length greater than $2$, it follows that $\rho_{\ell}(\ve{x},i) = \rho_{\ell}(\ve{y},i+j)$ cannot hold for any $i$ and $j<\ell$. The same can be shown for $j \geq \ell$, since in this case \eqref{eq:xy:cond:j:geq:l:a} and \eqref{eq:xy:cond:j:geq:l:b} imply a palindrome of length $\ell$ in either $\ve{x}$ or $\ve{y}$. \qed
\end{proof}
We therefore construct the following code, which can correct a single palindromic duplication of length $2$ up to $n$.
\begin{construction} \label{con:palindrome:free:words} We construct the code $\mathcal{C}_{\mathrm{PF}}(n)$, consisting of all $2$-palindrome free words as
	\[ \mathcal{C}_{\mathrm{PF}}(n) = \{ \ve{x} \in \Zq^n : S^{\rho^D_2}_1(\ve{x}) = \emptyset \}. \]
\end{construction}
With Lemma \ref{lemma:2:palindrome:free:words}, the code $\mathcal{C}_{\mathrm{PF}}$ can correct a single palindromic duplication of length $2$ to $n$. In the following, we investigate the cardinality and rate of Construction \ref{con:palindrome:free:words} and their asymptotic behavior.
\begin{lemma}\setlength{\arraycolsep}{4pt} \label{lemma:cardinality:palindrome:free}
	The cardinality of Construction \ref{con:palindrome:free:words} is
	\[ |\mathcal{C}_{\mathrm{PF}}(n)| = \sum_{i=1}^3 c_i(q) \lambda^{n-3}_i(q) , \]
	where
	\[ c_i(q) = q(q-1)\frac{(q^2+q)\lambda_i^2(q) + (q^2-1)\lambda_i(q) + q^2}{(q-1)\lambda_i^2(q) + (2q-4)\lambda_i(q) + 3q-3}, \]
	and $\lambda_i(q)$ are the solutions to the polynomial equation
	\[ -\lambda^3 + (q-1)\lambda^2 + (q-2)\lambda + (q-1) = 0. \]
\end{lemma}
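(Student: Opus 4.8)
The plan is to turn $|\mathcal{C}_{\mathrm{PF}}(n)|$ into a linear recurrence whose characteristic polynomial is exactly the cubic in the statement. First I would record a local description of $2$-palindrome-free words: a palindrome of length $2$ is a factor of the form $(s\,t\,t\,s)$, so $\ve{x}\in\Zq^n$ lies in $\mathcal{C}_{\mathrm{PF}}(n)$ if and only if it contains no such factor. In run-length terms this means (i) every run has length at most $3$ (a run of length $\geq 4$ contains $(s\,s\,s\,s)$), and (ii) every run of length exactly $2$ that has neighbouring runs on both sides has those two neighbours carrying distinct symbols (otherwise $(s\,t\,t\,s)$ appears). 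I would then classify valid words by the length of their terminal run, setting $A_n,B_n,C_n$ to be the number of words of $\mathcal{C}_{\mathrm{PF}}(n)$ whose last run has length $1,2,3$, so that $a_n:=|\mathcal{C}_{\mathrm{PF}}(n)|=A_n+B_n+C_n$.

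Next I would derive the transfer relations by appending one symbol and checking in each case that no new palindrome is created. Appending a symbol equal to the last one extends the terminal run; this is the unique way to grow a run from length $1$ to $2$ and from $2$ to $3$, is always palindrome-safe, and is forbidden when the terminal run already has length $3$. Appending a different symbol starts a new length-$1$ run: this costs $q-1$ choices when the terminal run has length $1$ or $3$, but only $q-2$ choices when it has length $2$, since the new symbol must also avoid the symbol preceding that length-$2$ run. This gives, for $n\geq 3$,
\[
\begin{pmatrix}A_{n+1}\\ B_{n+1}\\ C_{n+1}\end{pmatrix}
= M\begin{pmatrix}A_{n}\\ B_{n}\\ C_{n}\end{pmatrix},
\qquad
M=\begin{pmatrix} q-1 & q-2 & q-1\\ 1 & 0 & 0\\ 0 & 1 & 0\end{pmatrix}.
\]
The point needing care is the boundary: the $q-2$ rule applies only to an \emph{interior} length-$2$ run, but for $n\geq 3$ a terminal length-$2$ run automatically has a predecessor, so $M$ is valid precisely from $n\geq 3$ onward (applying $M$ already from $n=2$ overcounts). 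I would therefore start from the directly computed values $A_3=q^2(q-1)$, $B_3=q(q-1)$, $C_3=q$, consistent with all $q^3$ words of length $3$ being palindrome-free. A cofactor expansion gives $\det(\lambda I-M)=\lambda^3-(q-1)\lambda^2-(q-2)\lambda-(q-1)=:Q(\lambda)$, which is $-1$ times the cubic in the statement, so its roots are exactly the $\lambda_i(q)$.

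Finally, writing $\ve{v}_n=(A_n,B_n,C_n)^{\mathsf T}$ and $\ve{1}=(1,1,1)^{\mathsf T}$, we have $a_n=\ve{1}^{\mathsf T}M^{\,n-3}\ve{v}_3$ for $n\geq 3$. Assuming the three roots are distinct, an eigendecomposition of $M$ — equivalently a partial-fraction expansion of the rational generating function $\sum_{n\geq 3}a_n z^{\,n-3}=\ve{1}^{\mathsf T}(I-zM)^{-1}\ve{v}_3$, or a Lagrange/Vandermonde solve of the system $a_3=\sum_i c_i$, $a_4=\sum_i c_i\lambda_i$, $a_5=\sum_i c_i\lambda_i^2$ — yields $a_n=\sum_{i=1}^3 c_i(q)\lambda_i^{\,n-3}$ with $c_i(q)$ the residue at $\lambda_i$. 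This residue is a ratio of polynomials in $\lambda_i$ with denominator proportional to $Q'(\lambda_i)$. The key algebraic simplification is to clear one power of $\lambda_i$ and reduce via the characteristic relation $\lambda_i^3=(q-1)\lambda_i^2+(q-2)\lambda_i+(q-1)$: a direct computation shows
\[
\lambda_i\,Q'(\lambda_i)=(q-1)\lambda_i^2+(2q-4)\lambda_i+(3q-3),
\]
which is exactly the denominator in the lemma, and reducing the numerator the same way produces $q(q-1)\big[(q^2+q)\lambda_i^2+(q^2-1)\lambda_i+q^2\big]$. The main obstacle I anticipate is twofold: getting the boundary bookkeeping right so that $M$ and the initial vector $\ve{v}_3$ are mutually consistent, and the final reduction of the residues, where the characteristic relation must be applied carefully to both numerator and denominator to reach the stated closed form for $c_i(q)$.
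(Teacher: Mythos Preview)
Your argument is correct and actually cleaner than the paper's. The paper tracks five states, namely the number of $2$-palindrome-free words whose last three symbols match each of the patterns $aaa,\,aab,\,aba,\,abb,\,abc$, and writes down a $5\times 5$ transfer matrix with identical rows for $aab$, $aba$, $abc$; it then says only that ``solving the recursion with standard techniques yields the lemma.'' Your three-state recursion based on the length of the terminal run is the collapsed version of that system: because the paper's rows for $aab$, $aba$, $abc$ coincide, its $5\times 5$ matrix has rank three and the same cubic characteristic factor, so you are recovering the minimal linear recurrence directly. What your route buys is a smaller matrix whose characteristic polynomial is \emph{exactly} the cubic with no spurious factors, and you also make explicit the mechanism behind the closed form for $c_i(q)$---in particular the identity $\lambda_i\,Q'(\lambda_i)=(q-1)\lambda_i^2+(2q-4)\lambda_i+(3q-3)$---which the paper does not display. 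The paper's five-state description has the minor advantage that the case analysis for extending a word is purely local in the last three symbols, whereas you must argue once that a terminal run of length $2$ always has a predecessor for $n\geq 3$; you handle this correctly by starting the recursion at $n=3$.
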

\begin{proof}
	\setlength{\arraycolsep}{4pt}
	By the definition of the code $\mathcal{C}_{\mathrm{PF}}$, the cardinality $|\mathcal{C}_{\mathrm{PF}}|$ is given by the number of words $\ve{x} \in \Zq^n$ that do not contain a palindrome of length $2$. We present a recursive approach on how to compute this number. Let
	\[\ve{\mu}(n) = \begin{pmatrix} M_{aaa}(n) & M_{aab}(n) & M_{aba}(n) & M_{abb}(n) & M_{abc}(n) \end{pmatrix} \in \mathbb{N}^5 \]
	be a vector over the natural numbers, whose entries are defined as
	\begin{align*}
	M_{aaa}(n) &= |\{\ve{x} \in \Zq^n : S^{\rho^D_2}_1(\ve{x}) = \emptyset \land x_{n-2}=x_{n-1}=x_n \}|, \\
	M_{aab}(n) &= |\{\ve{x} \in \Zq^n : S^{\rho^D_2}_1(\ve{x}) = \emptyset \land x_{n-2}=x_{n-1}\neq x_n \}|, \\
	M_{aba}(n) &= |\{\ve{x} \in \Zq^n : S^{\rho^D_2}_1(\ve{x}) = \emptyset \land x_{n-2}=x_n\neq x_{n-1} \}|, \\
	M_{abb}(n) &= |\{\ve{x} \in \Zq^n : S^{\rho^D_2}_1(\ve{x}) = \emptyset \land x_{n-2}\neq x_{n-1}=x_n \}|, \\
	M_{abc}(n) &= |\{\ve{x} \in \Zq^n : S^{\rho^D_2}_1(\ve{x}) = \emptyset \land x_{n-2}\neq x_{n-1}, x_{n-1}\neq x_n, x_{n-2}\neq x_n \}|.
	\end{align*}
	With this definition, the elements of $\ve{\mu}(n)$ count the number of $2$-palindrome free words, which end with a specific pattern that is indicated by subscripts. A recursive relation $\ve{\mu}(n+1) = \ve{A} \ve{\mu}(n)$ can be found by counting the number of words of length $n+1$ with a specific ending pattern that are obtained by adding a symbol to a word of length $n$ with another ending pattern, such that the resulting word of length $n+1$ is still $2$-palindrome free. This recursive relation is illustrated at the exemplary case of $M_{aab}(n+1)$ in the following and can be deduced for the other cases in a similar fashion. Words of length $n+1$ that end on a pattern $aab$, where $a,b \in \Zq$ and $a\neq b$ can be created by adding a symbol $b$ to a word that ends on a pattern $aa$, i.e. $x_{n-1} = x_n$. Since the resulting word is not allowed to contain a palindrome of length $2$, we can either
	\begin{itemize}
		\item append $b\neq a$ to the pattern $aaa$
		\item or append $c$ with $c \neq a$ and $c \neq b$ to the pattern $abb$.
	\end{itemize}
	For the first case, there are $q-1$ possibilities to choose $b$ and for the second case, there are $q-2$ possibilities to choose $c$. Therefore, we obtain
	\begin{align*}
	M_{aab}(n+1) = (q-1) M_{aaa}(n) + (q-2) M_{abb}(n).
	\end{align*}
	Repeating the same steps for the other patterns, we yield the linear recursion $\ve{\mu}(n+1) = \ve{A} \ve{\mu}(n)$, where $\ve{A} \in \mathbb{N}_0^{5 \times 5}$ is given by
	\[\ve{A} = \begin{pmatrix}
	\,\,0\,\, & q-1 & \,\,0\,\,   & \,\,0\,\, & 0 \\
	0 & 0   & 1   & 1 & q-2\\
	0 & 0   & 1   & 1 & q-2\\
	1 & q-2 & 0   & 0 & 0\\
	0 & 0   & 1   & 1 & q-2 
	\end{pmatrix}.\]
	The starting conditions for the recursion are $M_{aaa}(3) = q$, $M_{aab}(3) = M_{aba}(3) = M_{abb}(3) = q(q-1)$ and $M_{abc}(3) = q(q-1)(q-2)$, since words of length $3$ cannot contain any palindrome of length $2$. Solving the recursion with standard techniques yields the lemma. \qed
\end{proof}
\begin{lemma} \label{lemma:asymptotic:rate}
	The rate $R_{\mathrm{PF}} = \frac{\log_q(|\mathcal{C}_{\mathrm{PF}}(n)|)}{n}$ of Construction \ref{con:palindrome:free:words} tends to $R_{\mathrm{PF}} \rightarrow \log_q (\lambda(q))$, where
	\begin{align}
	\lambda(q) &= \frac{q-1}{3}+ \sqrt[3]{A + \sqrt{A^2 - B^3}} + \sqrt[3]{A - \sqrt{A^2 - B^3}}, \nonumber \\
	A &= \frac{q-1}{2} + \frac{(q - 1)(q - 2)}{6} + \frac{(q-1)^3}{27}, \label{eq:largest:ev:palindromes} \\
	B &= \frac{q-2}{3} + \frac{(q-1)^2}{9}, \nonumber
	\end{align}
	as $n \rightarrow \infty$.
\end{lemma}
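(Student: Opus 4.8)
The plan is to extract the exponential growth rate directly from the exact cardinality of Lemma~\ref{lemma:cardinality:palindrome:free}. That lemma writes
\[ |\mathcal{C}_{\mathrm{PF}}(n)| = \sum_{i=1}^3 c_i(q)\,\lambda_i^{n-3}(q), \]
a linear combination of the $(n-3)$-rd powers of the three nonzero eigenvalues $\lambda_i(q)$ of the recursion matrix $\ve{A}$; the two zero eigenvalues (forced by the three identical rows of $\ve{A}$) contribute nothing. Denoting by $\lambda(q)$ the eigenvalue of largest modulus, the single dominant term governs the growth, so that $\log_q|\mathcal{C}_{\mathrm{PF}}(n)| = (n-3)\log_q\lambda(q) + \log_q c(q) + o(1)$. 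Dividing by $n$ and letting $n\to\infty$ then yields $R_{\mathrm{PF}}\to\log_q\lambda(q)$, reducing the lemma to identifying $\lambda(q)$ in closed form.

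To make the dominance rigorous I would invoke Perron--Frobenius. The matrix $\ve{A}$ is a nonnegative integer matrix, and one checks it is irreducible, so its spectral radius is a simple positive real eigenvalue strictly exceeding the moduli of all others, with a strictly positive associated coefficient $c(q)$. This identifies $\lambda(q)$ as the largest real root of the cubic and guarantees that the dominant term does not vanish, validating the limit above.

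It remains to solve the cubic. Multiplying the defining equation by $-1$, the eigenvalues satisfy $\lambda^3-(q-1)\lambda^2-(q-2)\lambda-(q-1)=0$. The shift $\lambda = t + \tfrac{q-1}{3}$ eliminates the quadratic term and produces a depressed cubic $t^3+pt+r=0$ whose coefficients are precisely $p=-3B$ and $r=-2A$, with $A$ and $B$ as in \eqref{eq:largest:ev:palindromes}. Cardano's formula for the real root,
\[ t = \sqrt[3]{-\tfrac{r}{2}+\sqrt{\tfrac{r^2}{4}+\tfrac{p^3}{27}}} + \sqrt[3]{-\tfrac{r}{2}-\sqrt{\tfrac{r^2}{4}+\tfrac{p^3}{27}}}, \]
together with $-\tfrac{r}{2}=A$, $\tfrac{r^2}{4}=A^2$ and $\tfrac{p^3}{27}=-B^3$, collapses to $t=\sqrt[3]{A+\sqrt{A^2-B^3}}+\sqrt[3]{A-\sqrt{A^2-B^3}}$. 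Adding back $\tfrac{q-1}{3}$ gives the stated $\lambda(q)$.

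The main obstacle is coordinating the Perron--Frobenius step with the casus irreducibilis of Cardano: when the discriminant $A^2-B^3$ is negative the two cube roots are complex, and one must argue that the conjugate choice of cube roots recovers exactly the largest real root $\lambda(q)$ rather than a subdominant one. Nonnegativity of $\ve{A}$ already forces the spectral radius to be real and positive, so this identification is the delicate point; the remaining work---depressing the cubic and matching the constants to $A$ and $B$---is routine algebra.
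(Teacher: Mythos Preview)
Your approach matches the paper's: use the closed form of Lemma~\ref{lemma:cardinality:palindrome:free}, let the dominant root govern the asymptotics, and express that root via Cardano; the paper is terser, simply asserting that the largest root dominates and quoting \eqref{eq:largest:ev:palindromes} without the Perron--Frobenius justification or the depressed-cubic computation you supply. One small caveat on the added rigor: for $q=2$ the $abc$-column of $\ve{A}$ vanishes and the $5\times5$ matrix is reducible, so Perron--Frobenius should be applied to the $4\times4$ block on the reachable states (which is indeed primitive).
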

\begin{proof}
	From Lemma \ref{lemma:cardinality:palindrome:free}, we know that $|\mathcal{C}_{\mathrm{PF}}(n)| = \sum_{i=1}^3 c_i(q) \lambda^{n-3}_i(q)$, where $\lambda_i(q)$ are the solutions to $-\lambda^3+(q-1)\lambda
	^2+(q-2)\lambda+q-1 = 0$. For large $n$, the cardinality of $\mathcal{C}_{\mathrm{PF}}(n)$ is therefore dominated by the largest root $\lambda(q)$ and thus
	\[\frac{\log_q\left(|\mathcal{C}_{\mathrm{PF}}(n)|\right)}{\log_q\left(\lambda^n(q)\right)} \rightarrow 1, \] 
	as $n \rightarrow \infty$. Hence, we can say that asymptotically $R_{\mathrm{PF}} \rightarrow \log_q(\lambda(q))$ as $n \rightarrow \infty$. $\lambda(q)$ is given explicitly by \eqref{eq:largest:ev:palindromes}. \qed
\end{proof}
On the other hand, the following property of the code rate $R_{\mathrm{PF}}$ can be shown for large alphabet sizes $q$.
\begin{lemma} \label{lemma:asymptotic:rate:q}
	For every code length $n$, the rate of Construction \ref{con:palindrome:free:words} satisfies the asymptotic property $\lim\limits_{q\rightarrow \infty} R_{\mathrm{PF}} =  1$.
\end{lemma}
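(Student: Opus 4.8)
The plan is to sandwich $R_{\mathrm{PF}}$ between $1$ and a lower bound that converges to $1$ as $q\to\infty$ with $n$ held fixed. The upper bound is immediate: every codeword of $\mathcal{C}_{\mathrm{PF}}(n)$ lies in $\Zq^n$, so $|\mathcal{C}_{\mathrm{PF}}(n)|\le q^n$, whence $\log_q|\mathcal{C}_{\mathrm{PF}}(n)|\le n$ and $R_{\mathrm{PF}}\le 1$ for every $q$. It therefore remains only to produce a matching lower bound on the cardinality.

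For that lower bound I would count the $2$-palindrome-free words by excluding those that violate the condition $S^{\rho^D_2}_1(\ve{x})=\emptyset$. A length-$2$ palindrome $\ve{v}\ve{v}^{\mathrm{R}}$ is exactly a length-$4$ factor of the form $(x_i,x_{i+1},x_{i+2},x_{i+3})$ with $x_{i+2}=x_{i+1}$ and $x_{i+3}=x_i$. There are $n-3$ such windows, and for a fixed window the number of words in $\Zq^n$ carrying the forbidden pattern is $q^{n-2}$, since the two palindrome symbols are free and the remaining $n-4$ coordinates are arbitrary. A union bound over all windows then gives
\[ |\mathcal{C}_{\mathrm{PF}}(n)| \ge q^n - (n-3)q^{n-2} = q^n\left(1 - \frac{n-3}{q^2}\right). \]
Taking base-$q$ logarithms and dividing by $n$ yields
\[ 1 \ge R_{\mathrm{PF}} \ge 1 + \frac{1}{n}\log_q\!\left(1 - \frac{n-3}{q^2}\right). \]
For fixed $n$ the quantity $1-(n-3)/q^2$ tends to $1$ as $q\to\infty$, so the correction term $\tfrac{1}{n}\log_q(1-(n-3)/q^2)$ tends to $0$; squeezing between the two displayed sides forces $\lim_{q\to\infty}R_{\mathrm{PF}}=1$.

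I do not expect a serious obstacle here. The only point requiring care is the combinatorial setup of the union bound: correctly identifying the forbidden length-$4$ pattern and verifying that each window contributes exactly $q^{n-2}$ words, including the degenerate case where the two free symbols coincide. One should also remark that the lower bound $q^n(1-(n-3)/q^2)$ is positive once $q^2>n-3$, which holds for all sufficiently large $q$, so the logarithm is well defined throughout the regime of interest. An alternative would be to start from the closed form $|\mathcal{C}_{\mathrm{PF}}(n)|=\sum_{i=1}^3 c_i(q)\lambda_i^{n-3}(q)$ of Lemma \ref{lemma:cardinality:palindrome:free} and extract the leading $q$-dependence of the dominant root, but the direct union-bound argument is shorter and avoids tracking the eigenvalue asymptotics.
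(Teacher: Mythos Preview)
Your argument is correct and takes a genuinely different route from the paper. The paper works through the closed form of Lemma~\ref{lemma:cardinality:palindrome:free}: it shows that the dominant root of the characteristic cubic satisfies $\lambda(q)/q\to 1$ and that its coefficient obeys $c(q)/q^3\to 1$, while the two remaining roots converge to the constants $-\tfrac12\pm\tfrac{\sqrt{3}}{2}\mathrm{j}$; this forces $|\mathcal{C}_{\mathrm{PF}}(n)|/q^n\to 1$ and hence $R_{\mathrm{PF}}\to 1$. Your union-bound computation bypasses the eigenvalue analysis entirely and works directly from the definition of a $2$-palindrome-free word. In fact your inequality $q^n(1-(n-3)/q^2)\le |\mathcal{C}_{\mathrm{PF}}(n)|\le q^n$ already yields the same sharper conclusion $|\mathcal{C}_{\mathrm{PF}}(n)|/q^n\to 1$, so nothing is lost. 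The paper's approach has the virtue of reusing machinery already established, and it makes explicit how the three roots behave individually; your approach is shorter, self-contained, and would transfer unchanged to any finite-pattern avoidance code. One small boundary remark: for $n\le 3$ there are no length-$4$ windows, so the union bound is vacuous and $|\mathcal{C}_{\mathrm{PF}}(n)|=q^n$ exactly, consistent with the $R_{\mathrm{PF}}=1$ entries in Table~\ref{tab:palindrome:free:rates}.
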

\begin{proof}
	In Lemma \ref{lemma:cardinality:palindrome:free}, we have seen that $|\mathcal{C}_{\mathrm{PF}}(n)| = \sum_{i=1}^3 c_i(q) \lambda^{n-3}_i(q)$. Analyzing \eqref{eq:largest:ev:palindromes} and with the expressions for $c(q)$ for the coefficient corresponding to the largest root $\lambda(q)$ we yield
	\begin{align*}
	\lim\limits_{q\rightarrow \infty} \frac{\lambda(q)}{q} &= 1,\\
	\lim\limits_{q \rightarrow \infty} \frac{c(q)}{q^3} &= 1.
	\end{align*}
	By standard methods, it can further be shown that the other two roots converge to a constant $\lim\limits_{q\rightarrow \infty} \lambda_i(q) = -\frac12 \pm \frac{\sqrt{3}}{2} \mathrm{j}$. Hence,
	\[\lim\limits_{q\rightarrow \infty} \frac{|\mathcal{C}_{\mathrm{PF}}(n)|}{q^n} = \lim\limits_{q\rightarrow \infty} \sum_{i=1}^3 \frac{c_i(q)}{q^3} \frac{\lambda^{n-3}_i(q)}{q^{n-3}} = 1. \]
	\qed
\end{proof}
Table \ref{tab:palindrome:free:rates} summarizes rates of $\mathcal{C}_{\mathrm{PF}}$ for different code lengths $n$ and alphabet sizes $q$.
\begin{table}[h]
	\centering
	\begin{tabular}{c|ccccccccc}
		\diagbox{$q$}{$n$} & $2$ & $4$ & $8$ & $16$ & $32$ & $64$ & $128$ & $256$ & $\infty$ \\ \hline
		$2$ & $1$ & $0.896$ & $0.792$ & $0.639$ & $0.595$ & $0.573$ & $0.562$ & $0.557$ & $0.552$ \\
		$3$ & $1$ & $0.973$ & $0.932$ & $0.911$ & $0.901$ & $0.895$ & $0.893$ & $0.892$ & $0.892$ \\
		$4$ & $1$ & $0.988$ & $0.971$ & $0.962$ & $0.957$ & $0.955$ & $0.954$ & $0.954$ & $0.953$ \\
		$5$ & $1$ & $0.994$ & $0.984$ & $0.979$ & $0.977$ & $0.976$ & $0.975$ & $0.975$ & $0.975$ \\
		$\infty$ & $1$ & $1$ & $1$ & $1$ & $1$ & $1$ & $1$ & $1$ & $1$ \\
	\end{tabular}
	\caption{Rates of $\mathcal{C}_{\mathrm{PF}}$ for different code lengths $n$ and alphabet sizes $q$}
	\label{tab:palindrome:free:rates}
\end{table}
By Lemma \ref{lemma:asymptotic:rate} the last column of Table \ref{tab:palindrome:free:rates} are the values $\log_q(\lambda(q))$ and by Lemma \ref{lemma:asymptotic:rate:q} the last row of Table \ref{tab:palindrome:free:rates} is equal ot $1$. It is observed that already for moderate $q$, the rate of this construction is close to $1$. However, for large $n$, the rate converges to $\log_q(\lambda(q)) < 1$. Also note that for $n\leq 3$, $R_{\mathrm{PF}} = 1$ for any $q$, since words of length at most $3$ are automatically $2$-palindrome free. \\

Decoding a received word $\ve{y} \in \mathbb{Z}_q^{n+\ell}$ can be done by first identifying the length of the duplication $\ell$ from the length of the received word. Then, the transmitted word can be found by deleting palindromic duplication at every possible position in $\ve{y}$ and deciding for the result, which gave a $2$-palindrome free word.
\subsection{Comparison with Burst Insertion Correcting Codes}
Figure \ref{fig:bounds_vs_burst} shows the lower bounds (LB) on the redundancy for binary codes and different duplication lengths $\ell$. We compare our results with maximum redundancies of single burst insertion correcting codes from \cite{Schoeny17}. To the best of our knowledge, these constructions have the largest codebooks that can correct a single burst insertion of length $\ell$. The figure also includes the redundancies from the single tandem and palindromic duplication correcting construction, presented in Section \ref{ss:code:single:tandem:duplication} ($\mathcal{C}_1(n)$), denoted by \textit{VT Tan.} and \ref{ss:code:single:palindromic:duplication} ($\mathcal{C}_2(n)$), denoted by \textit{Pal.}. Interestingly, there is a significant gap between the redundancies of existing burst insertion constructions and the derived lower bounds on the redundancy.
\begin{figure}[htp]
	\centering
	\input{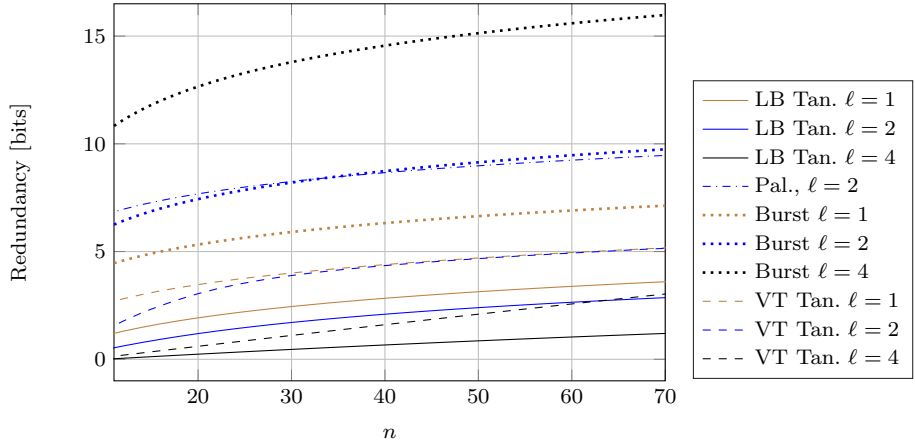}
	\caption{Tandem/palindromic duplication bounds vs. burst insertion redundancies}
	\label{fig:bounds_vs_burst}
\end{figure}
\section{Conclusion \& Outlook}
In this paper we have derived upper bounds on the cardinalities of codes correcting tandem or palindromic duplication errors of a given length $\ell$. We have derived constructions that correct a single tandem or palindromic duplication and seen that these construction yield lower redundancies than codes that correct an arbitrary burst of insertions. However, there remain several interesting aspects in this field, as
\begin{itemize}
	\item Asymptotic behavior of code size upper bounds
	\item Code size upper bounds for multiple palindromic duplications
	\item Code constructions correcting multiple tandem or palindromic duplications
	\item Code constructions correcting a combination of duplication errors and other error types, such as substitution errors or insertion and deletion errors,
\end{itemize}
and many more.
\appendix
\section{Sphere Sizes for Tandem and Palindromic Duplications and Deletions} \label{sec:error:spheres:tandem:palindromic:dupliations:deletions}
In the following we derive the size of the spheres $S^{\epsilon}_t(\ve{x})$, as defined in \eqref{eq:error:sphere}, for tandem and palindromic duplication and deletion errors. For the subsequent two lemmas we denote the $\ell$-step derivative by $\phi_\ell(\ve{x})=(\ve{u}_x, \ve{v}_x)$, according to the definition from Section \ref{ss:equivalence:tandem:duplication:deletion:codes}.
\begin{lemma}
	The sphere size for tandem duplications of length $\ell$ is given as
	\begin{equation*}
	|S^{\tau_\ell}_{t}(\ve{x})| = \binom{wt_{\mathrm{H}}(\ve{v}_x)+t}{t},
	\end{equation*}
	where $\phi_\ell(\ve{x})=(\ve{u}_x, \ve{v}_x)$ is the $\ell$-step derivative of $\ve{x}$.
\end{lemma}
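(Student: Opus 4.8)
The plan is to reduce the computation of the tandem-duplication sphere to a counting problem on the $\ell$-zero signature, exploiting the correspondence recalled in Section~\ref{ss:tandem:duplication:preliminaries}. The facts I would lean on are: that $\ve{x}$ is in bijection with $\phi_\ell(\ve{x}) = (\ve{u}_x, \ve{v}_x)$; that $\ve{v}_x$ is uniquely determined by the pair $(\mu_\ell(\ve{v}_x), \sigma_\ell(\ve{v}_x))$; that a single tandem duplication of length $\ell$ leaves the root $(\ve{u}_x, \mu_\ell(\ve{v}_x))$ unchanged and increases exactly one entry of $\sigma_\ell(\ve{v}_x)$ by one; and that the signature $\sigma_\ell(\ve{v}_x)$ has length $wt_{\mathrm{H}}(\ve{v}_x)+1$.

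First I would characterize the sphere explicitly. Performing $t$ tandem duplications in succession keeps the root fixed while incrementing the signature $t$ times, each increment adding one to a single coordinate. Hence any $\ve{y} \in S^{\tau_\ell}_t(\ve{x})$ satisfies $(\ve{u}_y, \mu_\ell(\ve{v}_y)) = (\ve{u}_x, \mu_\ell(\ve{v}_x))$ together with $\sigma_\ell(\ve{v}_y) = \sigma_\ell(\ve{v}_x) + \ve{e}$ for some $\ve{e} \in \mathbb{N}_0^{wt_{\mathrm{H}}(\ve{v}_x)+1}$ with $|\ve{e}|_1 = t$. Conversely, every such $\ve{e}$ is attainable: carry out $e_i$ duplications inside the $i$-th zero-run, which uses exactly $\sum_i e_i = t$ operations.

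Then I would set up the bijection $\ve{y} \leftrightarrow \ve{e}$. Because the root is unchanged and $\ve{v}_y$ (and hence $\ve{y}$) is uniquely recovered from $(\mu_\ell(\ve{v}_y), \sigma_\ell(\ve{v}_y))$, distinct increment vectors $\ve{e}$ yield distinct words $\ve{y}$ and vice versa, so $|S^{\tau_\ell}_t(\ve{x})|$ equals the number of $\ve{e} \in \mathbb{N}_0^{wt_{\mathrm{H}}(\ve{v}_x)+1}$ with $|\ve{e}|_1 = t$. A standard stars-and-bars count then gives $\binom{(wt_{\mathrm{H}}(\ve{v}_x)+1)+t-1}{t} = \binom{wt_{\mathrm{H}}(\ve{v}_x)+t}{t}$, which is the claimed expression.

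The only genuinely delicate point is the bijection: I must invoke the uniqueness of the $(\text{trunk},\text{signature})$ representation to rule out two different increment patterns collapsing to the same word, and I must confirm that every nonnegative integer increment of total weight $t$ is realizable by precisely $t$ duplications, so that the condition \emph{exactly} $t$ duplications translates cleanly into $|\ve{e}|_1 = t$ rather than $|\ve{e}|_1 \le t$. Once that is nailed down, the remaining arithmetic is routine.
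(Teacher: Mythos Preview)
Your proposal is correct and follows essentially the same approach as the paper: reduce to the $\ell$-zero signature, observe that $t$ tandem duplications correspond to adding a nonnegative integer vector of $\ell_1$-weight $t$ to $\sigma_\ell(\ve{v}_x)$, and count via stars-and-bars to obtain $\binom{|\sigma_\ell(\ve{v}_x)|+t-1}{t}=\binom{wt_{\mathrm{H}}(\ve{v}_x)+t}{t}$. Your write-up is in fact more careful than the paper's about justifying the bijection and the ``exactly $t$'' versus ``at most $t$'' distinction.
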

\begin{proof}
	Recall that a tandem duplication error corresponds to increasing one entry of the $\ell$-zero signature $\sigma_\ell(\ve{v}_x)$ by one. Then, the duplication sphere size equals the number of vectors $\ve{y} \in \mathbb{N}_0^{|\sigma_\ell(\ve{v}_x)|}$ with $\ve{y} \geq \sigma_\ell(\ve{v}_x)$ and $|\ve{y}|_1=|\sigma_\ell(\ve{v}_x)|_1 + t$. The number of such vectors is given by $\binom{|\sigma_\ell(\ve{v}_x)|+t-1}{t} = \binom{wt_{\mathrm{H}}(\ve{v}_x)+t}{t}$ \qed
\end{proof}
\begin{lemma}\label{lemma:tandem:deletion:sphere}
	The sphere size for tandem deletions of length $\ell$ is given as
	\begin{align*}
		|S^{\tau_\ell^D}_t(\ve{x})| &= |\{ \ve{s} \in \mathbb{N}_0^{|\sigma_\ell(\ve{v}_x)|} : \ve{s} \leq \sigma_\ell(\ve{v}) \land |\ve{s}|_1 = |\sigma_\ell(\ve{v})|_1 - t \}| \\
		&= |\{ \ve{s} \in \mathbb{N}_0^{|\sigma_\ell(\ve{v}_x)|} : \ve{s} \leq \sigma_\ell(\ve{v}) \land |\ve{s}|_1 = t \}|,
	\end{align*}
	where $\phi_\ell(\ve{x})=(\ve{u}_x, \ve{v}_x)$ is the $\ell$-step derivative of $\ve{x}$.
\end{lemma}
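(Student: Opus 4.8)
The plan is to exhibit a bijection between the tandem deletion sphere $S^{\tau_\ell^D}_t(\ve{x})$ and the set of admissible $\ell$-zero signatures. As established in Section~\ref{ss:tandem:duplication:preliminaries}, a single tandem deletion of length $\ell$ leaves the root $(\ve{u}_x, \mu_\ell(\ve{v}_x))$ unchanged and decreases exactly one (necessarily positive) entry of $\sigma_\ell(\ve{v}_x)$ by one. Conversely, decreasing any positive entry of the signature by one is realized by a legitimate tandem deletion, since a positive entry $\sigma_\ell(\ve{v}_x)_m = \lfloor m_m/\ell \rfloor \geq 1$ certifies a zero-run of length at least $\ell$ at the corresponding position. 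Because $\ve{v}_x$, and hence $\ve{x}$, is uniquely determined by the pair $(\mu_\ell(\ve{v}_x), \sigma_\ell(\ve{v}_x))$, a word $\ve{y}$ reached by deletions is fully specified, given the common root, by its signature $\sigma_\ell(\ve{v}_y)$.

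First I would argue that the map $\ve{y} \mapsto \sigma_\ell(\ve{v}_y)$ sends $S^{\tau_\ell^D}_t(\ve{x})$ injectively into $\{ \ve{s} \in \mathbb{N}_0^{|\sigma_\ell(\ve{v}_x)|} : \ve{s} \leq \sigma_\ell(\ve{v}_x) \land |\ve{s}|_1 = |\sigma_\ell(\ve{v}_x)|_1 - t \}$. Injectivity is immediate from the uniqueness of the root-plus-signature representation together with the fact that the root is preserved by every deletion. The containment holds because a chain of exactly $t$ deletions performs $t$ unit decrements on the nonnegative signature, so the resulting signature $\ve{s}$ satisfies $\ve{s} \leq \sigma_\ell(\ve{v}_x)$ and $|\ve{s}|_1 = |\sigma_\ell(\ve{v}_x)|_1 - t$; note in particular that no deletion can be wasted, as each genuinely removes $\ell$ zeros.

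For surjectivity I would show that every target signature $\ve{s}$ in this set is attainable by exactly $t$ valid deletions. Given such an $\ve{s}$, the coordinatewise difference $\ve{d} = \sigma_\ell(\ve{v}_x) - \ve{s}$ is nonnegative with $|\ve{d}|_1 = t$, so one may decrement the entries of $\sigma_\ell(\ve{v}_x)$ one unit at a time, in any order, until $\ve{s}$ is reached. At each intermediate step the entry being decremented is still strictly positive, since it is never pushed below its target value $s_m \geq 0$, so every step is a legitimate tandem deletion. This realizes $\ve{s}$ as the signature of a word in $S^{\tau_\ell^D}_t(\ve{x})$, completing the bijection and hence the first claimed equality.

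Finally, the second equality is a change of variable: the complementation map $\ve{s} \mapsto \ve{d} = \sigma_\ell(\ve{v}_x) - \ve{s}$ is a bijection between $\{ \ve{s} : \ve{s} \leq \sigma_\ell(\ve{v}_x),\, |\ve{s}|_1 = |\sigma_\ell(\ve{v}_x)|_1 - t \}$ and $\{ \ve{d} : \ve{d} \leq \sigma_\ell(\ve{v}_x),\, |\ve{d}|_1 = t \}$, because $\ve{d} \geq 0 \Leftrightarrow \ve{s} \leq \sigma_\ell(\ve{v}_x)$, $\ve{d} \leq \sigma_\ell(\ve{v}_x) \Leftrightarrow \ve{s} \geq 0$, and $|\ve{d}|_1 = |\sigma_\ell(\ve{v}_x)|_1 - |\ve{s}|_1 = t$. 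I expect the only delicate point to be the surjectivity argument, specifically justifying that the decrements can be carried out in an arbitrary order without ever attempting to delete from an exhausted zero-run; but this reduces to the observation above that the partial signatures remain above the nonnegative target, so the lemma follows without any heavy computation.
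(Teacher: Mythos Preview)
Your proposal is correct and follows the same approach as the paper: both exploit the correspondence between tandem deletions and unit decrements of positive entries in $\sigma_\ell(\ve{v}_x)$. The paper's own proof is a two-sentence sketch stating exactly this correspondence, while you have spelled out the bijection (injectivity, surjectivity, and the complementation for the second equality) in full detail.
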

\begin{proof}
	A tandem deletion corresponds to decreasing one entry of the $\ell$-zero signature $\sigma_\ell(\ve{v}_x)$ by one. It is only possible to delete a tandem duplication at positions, where the $\ell$-zero signature has positive entries. \qed
\end{proof}
Note that by this Lemma, $S^{\tau_\ell^D}_t(\ve{x}) = \emptyset$, if $|\sigma_\ell(\ve{v})|_1 < t$.
\begin{corollary}\label{cor:single:tandem:deletion:sphere}
	The sphere size for single tandem deletions of length $\ell$ is
	\begin{equation*}
	|S^{\tau_\ell^D}_1(\ve{x})| = wt_{\mathrm{H}}(\sigma_\ell(\ve{v})),
	\end{equation*}
	where $\phi_\ell(\ve{x})=(\ve{u}, \ve{v})$ is the $\ell$-step derivative of $\ve{x}$.
\end{corollary}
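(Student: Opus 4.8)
The plan is to obtain Corollary \ref{cor:single:tandem:deletion:sphere} as the $t=1$ specialization of Lemma \ref{lemma:tandem:deletion:sphere}. Setting $t=1$ in that lemma, the single tandem deletion sphere size equals the number of nonnegative integer vectors dominated by the zero signature that have $\ell_1$-norm one, i.e.
\[ |S^{\tau_\ell^D}_1(\ve{x})| = |\{ \ve{s} \in \mathbb{N}_0^{|\sigma_\ell(\ve{v})|} : \ve{s} \leq \sigma_\ell(\ve{v}) \land |\ve{s}|_1 = 1 \}|. \]
It then remains only to evaluate this count explicitly.

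First I would characterize the vectors with unit $\ell_1$-norm over $\mathbb{N}_0$: any such $\ve{s}$ must have exactly one coordinate equal to $1$ and all remaining coordinates equal to $0$, so these are precisely the standard unit vectors $\ve{e}_i$ for $i \in \{1, \dots, |\sigma_\ell(\ve{v})|\}$. Next I would impose the domination constraint $\ve{e}_i \leq \sigma_\ell(\ve{v})$. Since $\ve{e}_i$ has its only nonzero entry in position $i$, this componentwise inequality holds if and only if the $i$-th entry of $\sigma_\ell(\ve{v})$ is at least $1$, that is, strictly positive. Hence the admissible unit vectors are in bijection with the positions at which $\sigma_\ell(\ve{v})$ is nonzero.

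Finally I would count these positions: the number of nonzero entries of $\sigma_\ell(\ve{v})$ is by definition its Hamming weight $wt_{\mathrm{H}}(\sigma_\ell(\ve{v}))$, which yields the claimed identity. There is no real obstacle here, as the argument is a direct specialization and an elementary count; the only point worth emphasizing is the combinatorial reading that a single tandem deletion can be applied exactly at those runs where the zero signature records at least one block of $\ell$ consecutive zeros, which is precisely what the domination-by-$\ve{e}_i$ condition captures.
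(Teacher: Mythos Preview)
Your proposal is correct and is exactly the intended derivation: the paper states the corollary immediately after Lemma~\ref{lemma:tandem:deletion:sphere} without a separate proof, so specializing that lemma to $t=1$ and observing that the dominated unit vectors correspond to the nonzero entries of $\sigma_\ell(\ve{v})$ is precisely the argument.
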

We continue with deriving the palindromic duplication sphere size for the cases $\ell=1$ and $\ell=2$. For $\ell=1$, a palindromic duplication is a single duplication. Therefore, the sphere size is
\begin{equation*}
|S^{\rho_1}_1(\ve{x})| = r(\ve{x}),
\end{equation*}
as duplications in the same run yield the same outcome.
\begin{lemma}\label{lemma:palindromic:duplication:sphere:l:2}The size of the palindromic duplication sphere $|S^{\rho_2}_1(\ve{x})|$ for palindromic duplications of length $2$ is
	\begin{equation*}
	|S^{\rho_2}_1(\ve{x})| = n-1 - \sum_{i=3}^n (i-2) r^{(i)}(\ve{x}) = 2 r(\boldsymbol{x}) - r^{(1)}(\boldsymbol{x})-1.
	\end{equation*}
\end{lemma}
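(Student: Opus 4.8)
The plan is to count, among the $n-1$ admissible positions $p\in\{0,\dots,n-2\}$, how many produce pairwise distinct words, and to show that the only coincidences come from duplications inside a common run. First I would write the outcome explicitly: a palindromic duplication of length $2$ at position $p$ sends $\ve{x}$ to $\ve{y}=(x_1,\dots,x_{p+2},x_{p+2},x_{p+1},x_{p+3},\dots,x_n)$, i.e. it inserts the pair $(x_{p+2},x_{p+1})$ after coordinate $p+2$. There are exactly $n-1$ such positions, which I split into \emph{within-run} positions, where $x_{p+1}=x_{p+2}$, and \emph{boundary} positions, where $x_{p+1}\neq x_{p+2}$; there are $n-r(\ve{x})$ of the former and $r(\ve{x})-1$ of the latter, summing to $n-1$.

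For a within-run position the pair $(x_{p+1},x_{p+2})=(a,a)$ is palindrome-symmetric, so the palindromic duplication coincides with a tandem duplication and simply lengthens the surrounding run of $a$'s by $2$. Consequently all $i-1$ within-run positions of a fixed run of length $i$ yield the identical word, collapsing to a single output and contributing a loss of $i-2$ distinct words (nothing is lost for $i\le 2$). Summed over all runs this accounts for a loss of $\sum_{i\ge 3}(i-2)r^{(i)}(\ve{x})$.

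The heart of the argument is to show there are no further coincidences: if $\rho_2(\ve{x},p)=\rho_2(\ve{x},p')$ with $p<p'$, then the two duplication pairs lie inside one common run. I would prove this by a direct coordinate comparison of the two outputs. Using the explicit form above, the two words agree automatically for indices $\le p+2$ and for indices $\ge p'+5$, so equality reduces to the finite window $p+3\le i\le p'+4$. Matching coordinates there forces, at the lower end $x_{p+2}=x_{p+3}$, at the upper end $x_{p'+1}=x_{p'+2}$, and in between the period-two relations $x_j=x_{j+2}$ for $p+3\le j\le p'$ (with the small regimes $p'=p+1,p+2,p+3$ checked separately). Chaining these equalities collapses the two period-two subsequences into one, so $x_{p+1}=x_{p+2}=\dots=x_{p'+2}$ are all equal and hence belong to a single run. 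Since a boundary pair has $x_{p+1}\neq x_{p+2}$ and therefore cannot lie inside one run, this shows that every boundary position is its own class and that two positions collide precisely when they sit in a common run.

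It then remains to assemble the count. The equivalence classes of positions are exactly the $r^{(\ge 2)}(\ve{x})=r(\ve{x})-r^{(1)}(\ve{x})$ runs of length at least $2$ (one class each) together with the $r(\ve{x})-1$ boundary singletons, giving $|S^{\rho_2}_1(\ve{x})|=(r(\ve{x})-r^{(1)}(\ve{x}))+(r(\ve{x})-1)=2r(\ve{x})-r^{(1)}(\ve{x})-1$. Equivalently, starting from the $n-1$ positions and subtracting the within-run loss yields $|S^{\rho_2}_1(\ve{x})|=n-1-\sum_{i\ge 3}(i-2)r^{(i)}(\ve{x})$; the two expressions agree using $\sum_i r^{(i)}(\ve{x})=r(\ve{x})$ and $\sum_i i\,r^{(i)}(\ve{x})=n$. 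The main obstacle is precisely the injectivity step, namely ruling out accidental collisions between distinct boundary positions and between boundary and within-run outputs; the uniform coordinate comparison above settles it, and it can be cross-checked against the run-length-profile transitions catalogued in Table \ref{tab:Duplication_Constellations}, which assign a distinct profile change to each case.
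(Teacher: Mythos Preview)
Your proof is correct and follows essentially the same line as the paper's: both arguments show that $\rho_2(\ve{x},p)=\rho_2(\ve{x},p')$ forces $x_{p+1}=\dots=x_{p'+2}$, hence collisions occur exactly within a single run, and then count. The only cosmetic difference is that the paper appeals to the ready-made equivalence conditions in Appendix~\ref{sec:conditions:equivalence:palindromic:duplications} and specializes them to $\ell=2$, whereas you redo the coordinate comparison by hand and organize the count via the within-run/boundary partition; the substance is the same.
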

\begin{proof}
	We start with the observation that there are $n-1$ possible positions $i \in \{0,1, ..., n-2\}$ for palindromic duplications. Now, for $\ell=2$, the conditions $\rho_{\ell}(\ve{x},i) = \rho_{\ell}(\ve{x},i+j)$ \eqref{eq:cond:j:less:l:a} - \eqref{eq:cond:j:less:l:c} and \eqref{eq:cond:j:geq:l:a} - \eqref{eq:cond:j:geq:l:c} become  $x_1 = x_2 = \dots = x_{2+j} \, \forall \, j>0$. We therefore deduce that two palindromic duplications in $\ve{x}$ of length $2$ only result in the same vector $\boldsymbol{y}=\rho_{\ell}(\ve{x},i) = \rho_{\ell}(\ve{x},i+j)$ iff they appear in the same run in $\ve{x}$. Further, two palindromic duplications at two different positions $i$ and $i+j, j>0$ can only duplicate symbols from the same run, if this run has length at least $3$. Thus, every additional symbol to runs of length at least $2$ does not increase the duplication sphere size and has to be subtracted from the palindromic duplication sphere size. Using $\sum_{i=1}^n i r^{(i)}(\ve{x}) = n$ and $\sum_{i=1}^n r^{(i)}(\ve{x}) = r(\ve{x})$ yields the statement. \qed
\end{proof}
For $\ell \geq 3$ and $j \geq 2$, \eqref{eq:cond:j:less:l:a} - \eqref{eq:cond:j:less:l:c} and \eqref{eq:cond:j:geq:l:a} - \eqref{eq:cond:j:geq:l:c} do not imply $x_1=x_2 = \dots =x_{\ell+j}$. For example, consider $\ell=3$ and the word $\ve{x} = (010010)$. Then, $\rho_{3}(\ve{x},0) = \rho_{3}(\ve{x},3) = (010010010)$. However, it is possible to find an upper bound on the size of the palindromic duplication sphere. For $j = 1$, \eqref{eq:cond:j:less:l:a} - \eqref{eq:cond:j:less:l:c} become $x_1 = x_2 = \dots = x_{\ell+1}$. Therefore two neighboring palindromic duplications can only result in the same word if they appear in one run.
\begin{lemma} The size of the palindromic duplication sphere $S^{\rho_\ell}_1(\ve{x})$ is upper bounded by
	\begin{equation*}
	|S^{\rho_\ell}_1(\ve{x})| \leq n - \ell +1 - \sum_{i=\ell+1}^{n} (i-\ell)r^{(i)}(\ve{x}).
	\end{equation*}
\end{lemma}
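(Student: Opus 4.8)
The plan is to bound the sphere size by the number of admissible duplication positions, $n-\ell+1$, and then to subtract a guaranteed quota of coincidences among those positions. First I would recall that a palindromic duplication of length $\ell$ may be applied at any position $p \in \{0,1,\dots,n-\ell\}$, producing $n-\ell+1$ output words that need not be distinct; hence $|S^{\rho_\ell}_1(\ve{x})|$ is at most $n-\ell+1$ minus any collisions I can explicitly exhibit.

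Next I would isolate the collisions that take place entirely inside a single run. Fix a run of length $i$ occupying positions $s+1,\dots,s+i$ with constant symbol $a$. Whenever the length-$\ell$ window $\ve{v}$ lies entirely within this run—which happens exactly for $p \in \{s,s+1,\dots,s+i-\ell\}$, a total of $i-\ell+1$ positions when $i\geq \ell$—one has $\ve{v}=a^\ell$, so $\ve{v}^\mathrm{R}=\ve{v}$ and the palindromic duplication merely lengthens the run to $a^{i+\ell}$. All of these positions therefore yield the identical output word. Equivalently, the $j=1$ condition $x_1=\dots=x_{\ell+1}$ established immediately before the lemma shows that neighboring such positions coincide, and transitivity across the run collapses the whole block.

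I would then tally the savings. Collapsing the $i-\ell+1$ coincident positions of a run of length $i$ into a single output removes $i-\ell$ positions from the count, and this is positive only for runs with $i\geq \ell+1$ (a run of length exactly $\ell$ gives $i-\ell=0$, and shorter runs yield no such window at all). Because distinct runs occupy disjoint position sets, these savings accumulate without overlap, producing precisely the subtracted term $\sum_{i=\ell+1}^{n}(i-\ell)\,r^{(i)}(\ve{x})$.

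The step I expect to demand the most care is arguing why this yields an \emph{upper} bound rather than an equality. I account only for within-run coincidences and deliberately ignore any further collisions between positions lying in different runs, such as $\rho_3((010010),0)=\rho_3((010010),3)$ noted just above. Since every additional unaccounted collision can only \emph{decrease} the true number of distinct outputs, the expression $n-\ell+1-\sum_{i=\ell+1}^{n}(i-\ell)\,r^{(i)}(\ve{x})$ remains a valid upper bound on $|S^{\rho_\ell}_1(\ve{x})|$, which completes the argument.
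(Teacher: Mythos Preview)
Your proposal is correct and follows essentially the same approach as the paper: start from the $n-\ell+1$ possible duplication positions, observe that all positions whose window lies inside a single run of length $i\geq\ell$ yield the same descendant, and subtract the resulting $i-\ell$ redundant positions per such run. Your write-up is in fact more explicit than the paper's terse version, particularly in justifying the within-run collapse via the $j=1$ condition and in explaining why ignoring cross-run collisions still yields a valid upper bound.
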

\begin{proof}
	There are $n-\ell+1$ possible positions for palindromic duplications of length $\ell$. Now, as seen before, duplications in the same run result in the same descendant. We therefore subtract the additional $i-\ell$ entries of runs with length at least $\ell+1$ from the number of possible positions for duplications to obtain an upper bound on the duplication sphere.  \qed
\end{proof}
Similar to the previous discussion, we start with deriving the size of the palindromic deletions spheres for $\ell=1$ and $\ell=2$. For $\ell=1$, a palindromic deletion is a de-duplication of one symbol. Therefore, the size of the error sphere becomes
\begin{equation}
|S^{\rho_1^D}_1(\ve{x})| = r^{(\geq 2)}(\ve{x}) ,
\end{equation}
where $r^{(\geq 2)}(\ve{x})$ is the number of runs of length at least $2$. Further, we derive the following lemma for binary words.
\begin{lemma} \label{lemma:palindromic_deletion_sphere_l_2}
	The size of the palindromic deletion sphere $|S^{\rho^D_2}_1(\ve{x})|$ for $q=2$ is
	\begin{equation*}
	|S^{\rho^D_2}_1(\ve{x})| = r^{(2)}_{\mathcal{I}}(\ve{x}) + r^{(\geq 4)}(\ve{x}),
	\end{equation*}
	where $r^{(2)}_{\mathcal{I}}(\ve{x})$ is the number of runs of length $2$, that are located at the interior of $\ve{x}$, i.e., between $x_2$ and $x_{n-1}$ and, $r^{(\geq 4)}(\ve{x})$ denotes the number of runs of length at least $4$ in $\ve{x}$.
\end{lemma}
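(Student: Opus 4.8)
The plan is to reduce a length-$2$ palindromic deletion to a local pattern condition on $\ve{x}$, classify the admissible deletions over the binary alphabet, and then count distinct outcomes by resolving exactly when two admissible deletions collide. Since for $\ell=2$ we have $\ve{v}\ve{v}^{\mathrm{R}}=(v_1,v_2,v_2,v_1)$, a deletion $\rho^D_2(\ve{x},p)$ is admissible precisely when the window $x_{p+1}x_{p+2}x_{p+3}x_{p+4}$ is a $4$-palindrome, i.e. $x_{p+1}=x_{p+4}$ and $x_{p+2}=x_{p+3}$, and the result is $\ve{y}=(x_1,\dots,x_{p+2},x_{p+5},\dots,x_n)$, obtained by deleting the two positions $p+3,p+4$. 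For $q=2$ this window takes one of two mutually exclusive forms: either $x_{p+1}=x_{p+2}$, forcing all four symbols equal (the pattern $aaaa$), or $x_{p+1}\neq x_{p+2}$, giving the pattern $abba$ with $a\neq b$.

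First I would match these two types to the run structure. The $aaaa$ windows are exactly those lying inside a run of length at least $4$; a run of length $m\geq 4$ contains $m-3$ admissible positions. The $abba$ windows are exactly those whose central pair $x_{p+2}x_{p+3}$ is a run of length exactly $2$ that is flanked on both sides; here the binary alphabet is essential, since an interior run of length $2$ automatically has the same symbol on either side and hence completes a palindrome. Thus the $abba$ positions are in bijection with the interior runs of length $2$, i.e. those contained strictly between $x_2$ and $x_{n-1}$, each contributing a single admissible position. This isolates the two terms $r^{(\geq 4)}(\ve{x})$ and $r^{(2)}_{\mathcal{I}}(\ve{x})$, while boundary runs of length $2$ are correctly excluded for lack of a flanking symbol.

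The crux, and the step I expect to be the main obstacle, is to show that distinct admissible positions yield the same word if and only if they both lie inside a common run of length at least $4$. I would prove this by writing out the coordinatewise equality of $\rho^D_2(\ve{x},p)$ and $\rho^D_2(\ve{x},p')$ for $p<p'$. Cancelling the agreeing prefix and suffix reduces the equality to the $2$-periodicity condition $x_i=x_{i+2}$ for all $p+3\leq i\leq p'+2$, which over $\{0,1\}$ means the block $x_{p+3}\dots x_{p'+4}$ is either constant or strictly alternating. In the constant case, admissibility of $p$ forces $x_{p+1}=\dots=x_{p'+4}$, a single run of length at least $5$, so both deletions are of type $aaaa$ inside it. In the alternating case I would derive a contradiction: admissibility of $p'$ demands $x_{p'+2}=x_{p'+3}$, but these are consecutive indices inside a strictly alternating binary block (note $p'+2\geq p+3$), so they must differ. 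Hence no collision occurs outside a shared long run.

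Finally I would assemble the count. By the collision characterization, all $m-3$ admissible positions inside a run of length $m\geq 4$ produce one common word, distinct interior length-$2$ runs produce pairwise distinct words, and no $abba$ outcome coincides with any $aaaa$ outcome. Therefore $|S^{\rho^D_2}_1(\ve{x})|$ equals the number of runs of length at least $4$ plus the number of interior runs of length $2$, namely $r^{(\geq 4)}(\ve{x})+r^{(2)}_{\mathcal{I}}(\ve{x})$, as claimed. A small sanity check on a word such as $(0110110)$, for which the formula predicts $2$ and the two admissible deletions give the distinct words $(01110)$ and $(01101)$, would confirm the bookkeeping.
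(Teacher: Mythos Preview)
Your proof is correct and follows essentially the same approach as the paper: classify the admissible $4$-windows over $\mathbb{Z}_2$ into the $aaaa$ and $abba$ types, match them to runs of length $\geq 4$ and interior length-$2$ runs respectively, and then argue that collisions occur only inside a common long run. The paper handles the collision step by a brief back-reference to the analogous conditions derived for the duplication sphere (Lemma~\ref{lemma:palindromic:duplication:sphere:l:2} and the appendix equations), whereas you work it out directly via the $2$-periodicity reduction, which is a bit more self-contained but not a genuinely different route.
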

\begin{proof}
	There are $4$ possible patterns $(0000)$, $(1111)$, $(0110)$, $(1001)$, at which palindromic deletions of length $2$ can occur. Recall that, as we have seen in the proof of Lemma \ref{lemma:palindromic:duplication:sphere:l:2}, two palindromic deletions of length $2$ at two distinct positions in a word $\ve{x}$ can only results in the same outcome, if they appear in the same run. Every run of length at least $4$ contains one of the patterns $(0000)$, $(1111)$ and therefore will contribute one element to the palindromic deletion sphere. The patterns $(0110)$, $(1001)$ contain a run of length exactly $2$, that is located in the interior of $\ve{x}$, such that there is at least one symbol to the left and right of the run. Thus, every run of length $2$, that is located in the interior of $\ve{x}$ also contributes one unique element in the palindromic deletion sphere. Therefore, the total size of the deletion sphere is $r^{(2)}_{\mathcal{I}}(\ve{x}) + r^{(\geq4)}(\ve{x})$.  \qed
\end{proof}
Let us define the matrix $\ve{A}^{\rho_\ell}(\ve{x}) \in \mathbb{Z}_q^{\ell\times n-2\ell+1}$ to be
\begin{equation}
\ve{A}^{\rho_\ell}(\ve{x}) = \begin{pmatrix}
x_{2\ell}-x_1 & x_{2\ell+1}-x_2 & \dots & x_n-x_{n-2\ell+1} \\
x_{2\ell-1}-x_2 & x_{2\ell}-x_3 & \dots & x_{n-1}-x_{n-2\ell+2} \\
\vdots & \vdots &\ddots & \vdots \\
x_{\ell+1}-x_\ell & x_{\ell+2}-x_{\ell+1} & \dots & x_{n-\ell+1}-x_{n-\ell}
\end{pmatrix}.
\end{equation}
With this definition it is directly possible to establish the following upper bound on the size of the palindromic deletion spheres for arbitrary deletion length $\ell$.
\begin{lemma} \label{lemma:upper:bound:deletion:sphere}
	The palindromic deletion sphere $|S^{\rho^D_\ell}_1(\ve{x})|$ is upper bounded by
	\begin{equation*}
	|S^{\rho^D_\ell}_1(\ve{x})| \leq r^{(0)} \left(\ve{A}^{\rho_\ell}(\ve{x})\right),
	\end{equation*}
	where $r^{(0)} \left(\ve{A}^{\rho_\ell}(\ve{x})\right)$ is the number of runs of all zero columns in $\left(\ve{A}^{\rho_\ell}(\ve{x})\right)$.
\end{lemma}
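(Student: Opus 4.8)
The plan is to read the all-zero columns of $\ve{A}^{\rho_\ell}(\ve{x})$ as exactly the positions at which a palindromic deletion of length $\ell$ is admissible, and then to show that \emph{adjacent} admissible positions produce identical descendants, so that each maximal run of zero columns contributes at most one word to the sphere.

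First I would make the correspondence between columns and deletion positions explicit. The entry in row $j$, column $k$ of $\ve{A}^{\rho_\ell}(\ve{x})$ is $x_{2\ell+k-j}-x_{k+j-1}$, so column $k$ vanishes identically precisely when $x_{(k-1)+i}=x_{(k-1)+2\ell+1-i}$ for all $i\in\{1,\dots,\ell\}$. Writing $p=k-1$, this is exactly the condition that the length-$2\ell$ window $x_{p+1},\dots,x_{p+2\ell}$ has the palindromic form $\ve{v}\ve{v}^{\mathrm{R}}$, i.e.\ that $\rho^D_\ell(\ve{x},p)$ is defined. Since every element of $S^{\rho^D_\ell}_1(\ve{x})$ arises from at least one admissible position $p$, the sphere size is already at most the number of admissible positions, i.e.\ the number of zero columns; the content of the lemma is the stronger statement obtained by grouping these positions into runs.

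The key step is to show that if two adjacent positions $p$ and $p+1$ are both admissible, then the block $x_{p+1},\dots,x_{p+2\ell+1}$ is constant. Abbreviating $a_i=x_{p+i}$, admissibility at $p$ is the reflection $a_i=a_{2\ell+1-i}$ and admissibility at $p+1$ is the reflection $a_i=a_{2\ell+3-i}$. Composing these two reflections yields the translation $a_i=a_{i+2}$ for $1\le i\le 2\ell-1$, so the block is $2$-periodic; the central equality $a_\ell=a_{\ell+1}$ of the first palindrome then forces the odd- and even-indexed values to coincide, collapsing the $2$-periodic pattern to a single symbol. Iterating this over a maximal run of admissible positions $p,p+1,\dots,p+m$ (each consecutive pair overlapping in the preceding constant block) shows that the whole block $x_{p+1},\dots,x_{p+2\ell+m}$ is a constant run.

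Finally I would observe that each admissible deletion inside such a constant run removes $\ell$ equal symbols from the run and leaves the remainder of $\ve{x}$ untouched, so all $m+1$ deletions in the run produce the \emph{same} word. Hence every maximal run of zero columns contributes at most one element to $S^{\rho^D_\ell}_1(\ve{x})$, and summing over runs gives $|S^{\rho^D_\ell}_1(\ve{x})|\le r^{(0)}\!\left(\ve{A}^{\rho_\ell}(\ve{x})\right)$; coincidences between descendants arising from \emph{different} runs can only shrink the left-hand side, so the inequality is safe. I expect the reflection-composition argument collapsing a pair of overlapping palindromes to a constant run to be the only delicate point, the remaining bookkeeping (index ranges, and the fact that distinct runs need not be handled separately for an upper bound) being routine.
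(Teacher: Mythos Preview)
Your proposal is correct and follows essentially the same approach as the paper's proof: identify zero columns with admissible deletion positions, argue that two neighboring admissible positions force the whole $(2\ell+1)$-block to be constant, and conclude that each run of zero columns yields a single descendant. The paper merely asserts the constant-block step (``it can be shown that\ldots''), whereas your reflection-composition argument actually supplies this justification, so your write-up is in fact more complete than the original.
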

\begin{proof}
	A palindrome of length $\ell$ in the word $\ve{x}$ corresponds to a zero column in the matrix $\ve{A}^{\rho_\ell}(\ve{x})$. Therefore palindromic deletions are only possible at positions $i$, where $\ve{A}^{\rho_\ell}(\ve{x})$ has a zero-column. Further, it can be shown that two neighboring zero columns are only possible if $x_{i+1} = x_{i+2} = \dots = x_{i+2\ell+1}$, i.e. for a run of length $2\ell+1$. However, two palindromic deletions inside the same run result in the same words. Therefore, every run of all zero columns in $\left(\ve{A}^{\rho_\ell}(\ve{x})\right)$ contributes one unique element to $S^{\rho^D_\ell}_1(\ve{x})$. \qed
\end{proof}
\begin{example}
	Consider the word $\ve{x} = (21011012210) \in \mathbb{Z}_3^{11}$. The palindromic deletion sphere for deletions of length $3$ is given by $S^{\rho^D_3}_1(\ve{x}) = \{ (21012210), (21011012) \}$. The matrix $\ve{A}^{\rho_3}(\ve{x})$ is given by
	\[
	\ve{A}^{\rho_3}(\ve{x}) = \begin{pmatrix} 1 & 0 & 2 & 1 & 0 & 0 \\
	0 & 0 & 0 & 1 & 2 & 0 \\
	1 & 0 & 2 & 1 & 1 & 0
	\end{pmatrix}.
	\]
	Applying Lemma \ref{lemma:upper:bound:deletion:sphere}, yields $|S^{\rho^D_\ell}_1(\ve{x})| \leq 2$.
\end{example}
\section{Equivalence of Palindromic Duplications Errors in One Word} \label{sec:conditions:equivalence:palindromic:duplications}
In this section we derive conditions that two palindromic duplications, respectively deletions at two different positions $i$ and $i+j$ with $j > 0$ result in the same word $\rho_{\ell}(\ve{x},i) = \rho_{\ell}(\ve{x},i+j)$, respectively $\rho^D_{\ell}(\ve{x},i) = \rho^D_{\ell}(\ve{x},i+j)$ for palindromic deletions. For $j < \ell$ the condition $\rho_{\ell}(\ve{x},i) = \rho_{\ell}(\ve{x},i+j)$ can be expressed as (the left hand side of the equations corresponds to $\rho_{\ell}(\ve{x},i+j)$ and the right hand side to $\rho_{\ell}(\ve{x},i)$)\\
\vspace{-15pt}
\begin{subequations}
	\begin{align}
	x_{i+\ell+1+m}  &= x_{i+\ell-m}, \quad &m &\in \{0, \dots, j-1\}, \label{eq:cond:j:less:l:a} \\
	x_{i+\ell+2j-m}  &= x_{i+\ell-m}, \quad& m &\in \{j, \dots, \ell-1\},\label{eq:cond:j:less:l:b}  \\
	x_{i+\ell+2j-m}  &= x_{i+1+m}, \quad &m &\in \{\ell, \dots, \ell+j-1\}. \label{eq:cond:j:less:l:c}
	\end{align}
\end{subequations}
For $j \geq \ell$ these conditions are \\
\vspace{-15pt}
\begin{subequations}
	\begin{align}
	x_{i+\ell+1+m}  &= x_{i+\ell-m}, \quad &m &\in \{0, \dots, \ell-1\}, \label{eq:cond:j:geq:l:a} \\
	x_{i+\ell+1+m}  &= x_{i+1+m}, \quad &m &\in \{\ell, \dots, j-1\}, \label{eq:cond:j:geq:l:b} \\
	x_{i+\ell+2j-m}  &= x_{i+1+m}, \quad &m &\in \{j, \dots, \ell+j-1\}. \label{eq:cond:j:geq:l:c}
	\end{align}
\end{subequations}
The conditions $\rho^D_{\ell}(\ve{x},i) = \rho^D_{\ell}(\ve{x},i+j)$ for $j>0$ are\\
\vspace{-15pt}
\begin{subequations}
	\begin{align}
	x_{i+\ell+1+m}  &= x_{i+\ell-m}, \quad &m &\in \{0, \dots, \ell-1\}, \label{eq:cond:del:a} \\
	x_{i+\ell+j+1+m}  &= x_{i+\ell+j-m}, \quad &m &\in \{0, \dots, \ell-1\}, \label{eq:cond:del:b} \\
	x_{i+2\ell+1+m}  &= x_{i+\ell+1+m}, \quad &m &\in \{0, \dots, j-1\}. \label{eq:cond:del:c}
	\end{align}
\end{subequations}
\section{Equivalence of Palindromic Duplications in Two Words} \label{sec:conditions:equivalence:palindromic:duplications:two:words}
In this section we derive conditions that two palindromic duplications at two different positions $i$ and $i+j$ with $j > 0$ result in the same word $\rho_{\ell}(\ve{x},i) = \rho_{\ell}(\ve{y},i+j)$. For $j < \ell$ the condition $\rho_{\ell}(\ve{x},i) = \rho_{\ell}(\ve{y},i+j)$ can be expressed as
\begin{subequations}
	\begin{align}
	x_{m} &= y_{m},  &m &\in \{1, \dots, i+\ell\}\cup \{ i+j+\ell+1, \dots, n \}, \label{eq:xy:cond:j:less:l:a} \\
	x_{i+\ell-m} &= y_{i+\ell+1+m},  &m &\in \{0, \dots, j-1\}, \label{eq:xy:cond:j:less:l:b} \\
	x_{i+1+m} &= y_{i+2j+1+m}, & m &\in \{0, \dots, \ell-j-1\},\label{eq:xy:cond:j:less:l:c}  \\
	x_{i+\ell+1+m} &= y_{i+2j-m},  &m &\in \{0, \dots, j-1\}. \label{eq:xy:cond:j:less:l:d}	
	\end{align}
\end{subequations}
For $j \geq \ell$ these conditions are \\
\vspace{-15pt}
\begin{subequations}
	\begin{align}
	x_{m} &= y_{m},  &m &\in \{1, \dots, i+\ell\}\cup \{ i+j+\ell+1, \dots, n \}, \label{eq:xy:cond:j:geq:l:a} \\
	x_{i+\ell-m} &= y_{i+\ell+1+m},  &m &\in \{0, \dots, \ell-1\}, \label{eq:xy:cond:j:geq:l:b} \\
	x_{i+\ell+1+m} &= y_{i+2\ell+1+m}, & m &\in \{0, \dots, j-\ell-1\},\label{eq:xy:cond:j:geq:l:c}  \\
	x_{i+j+1+m} &= y_{i+j+\ell-m},  &m &\in \{0, \dots, \ell-1\}. \label{eq:xy:cond:j:geq:l:d}
	\end{align}
\end{subequations}
The conditions $\rho^D_{\ell}(\ve{x},i) = \rho^D_{\ell}(\ve{y},i+j)$ for $j>0$ are\\
\vspace{-15pt}
\begin{subequations}
	\begin{align}
	x_{m} &= y_{m},  &m &\in \{1, \dots, i+\ell\}\cup \{ i+j+2\ell+1, \dots, n \}, \label{eq:xy:cond:del:a} \\
	x_{i+\ell-m}  &= x_{i+\ell+1+m}, \quad &m &\in \{0, \dots, \ell-1\}, \label{eq:xy:cond:del:b} \\
	y_{i+j+\ell-m}  &= y_{i+j+\ell+1+m}, \quad &m &\in \{0, \dots, \ell-1\}, \label{eq:xy:cond:del:c} \\
	x_{i+2\ell+1+m}  &= y_{i+\ell+1+m}, \quad &m &\in \{0, \dots, j-1\}. \label{eq:xy:cond:del:d}
	\end{align}
\end{subequations}
%
%

\bibliographystyle{spmpsci}      
\bibliography{ref.bib}   


\end{document}